\newcommand{\beq}{\begin{equation}}
\newcommand{\eeq}{\end{equation}}
\newtheorem{theorem}{Theorem}
\newtheorem{lemma}{Lemma}
\newtheorem{corollary}{Corollary}
\newtheorem{definition}{Definition}
\theoremstyle{remark}
\newtheorem{remark}{Remark}
\title{Learning to Optimize:  Training Deep Neural Networks for Wireless Resource Management}
\author{\authorblockN{Haoran Sun$^{*}$, Xiangyi Chen$^{*}$, Qingjiang Shi, Mingyi Hong, Xiao Fu, and Nicholas D. Sidiropoulos}
\thanks{H. Sun, X. Chen and M. Hong are with the Department of  Electrical and Computer Engineering, University of Minnesota, Minneapolis, MN 55455, USA. Emails: \{sun00111, chen5719, mhong\}@umn.edu}
\thanks{Q. Shi is with the College of Electronic and Information Engineering, Nanjing University of Aeronautics and Astronautics, Nanjing, China. Email: qing.j.shi@gmail.com}
\thanks{X. Fu is with the School of Electrical Engineering and Computer Science, Oregon State University, Corvallis, OR 97331, USA. Email: xiao.fu@oregonstate.edu}
\thanks{N. D. Sidiropoulos is with the Department of Electrical and Computer Engineering, University of Virginia, Charlottesville, VA 22904, USA. Email: nikos@virginia.edu}
\thanks{$^{*}$ The first two authors contributed equally to this work. H. Sun, X. Chen and M. Hong are supported by NSF grants CMMI-1727757, CCF-1526078, and an AFOSR grant 15RT0767.}
\thanks{An early version of this  paper has been  published in IEEE 18th International Workshop on Signal Processing Advances in Wireless Communications (SPAWC 2017) { \cite{sun2017learning}}}}
\begin{document}
\maketitle
\begin{abstract}
For the past couple of decades, numerical optimization has played a central role in addressing wireless resource management problems such as power control and beamformer design. However, optimization algorithms often entail considerable complexity, which creates a serious gap between theoretical design/analysis and real-time processing. To address this challenge, we propose a new learning-based approach. The key idea is to treat the input and output of a resource allocation algorithm as an unknown non-linear mapping and use a deep neural network (DNN) to approximate it. If the non-linear mapping can be learned accurately by a DNN of moderate size, then resource allocation can be done in almost \emph{real time} -- since passing the input through a DNN only requires a small number of simple operations. 

In this work, we address both the thereotical and practical aspects of DNN-based algorithm approximation with applications to wireless resource management.
We first pin down a class of optimization algorithms that are `learnable' in theory by a fully connected DNN. Then, we focus on DNN-based approximation to a popular power allocation algorithm named WMMSE (Shi {\it et al} 2011). We show that using a DNN to approximate WMMSE can be fairly accurate -- the approximation error $\epsilon$ depends mildly [in the order of $\log(1/\epsilon)$] on the numbers of neurons and layers of the DNN. On the implementation side, we use extensive numerical simulations to demonstrate that DNNs can achieve orders of magnitude speedup in computational time compared to state-of-the-art power allocation algorithms based on optimization.
\end{abstract}

\section{Introduction}
Resource management tasks, such as transmit power control, transmit/receive beamformer design, and user admission control, {are critical} for future wireless networks. Extensive research has been done to develop various resource management schemes; see the recent overview articles \cite{hong12survey,bjornson13}. For decades, {numerical} optimization has played a central role in addressing wireless resource management problems. Well-known optimization-based algorithms for such purposes include those developed for power control (e.g., iterative water-filling type algorithms \cite{yu02a,scutari08a}, interference pricing \cite{Schmidt09}, SCALE \cite{papand09}), transmit/receive beamformer design (e.g., the WMMSE algorithm \cite{shi11WMMSE_TSP}, pricing-based schemes \cite{kim11}, semidefinite relaxation based schemes \cite{luo10SDPMagazine}), admission control (e.g., the deflation based schemes \cite{liu13deflation}, convex approximation based schemes \cite{Matskani08}), user/base station (BS) clustering (e.g., the sparse optimization based schemes \cite{hong12sparse}), just to name a few. Simply speaking, these algorithms all belong to the class of {\it iterative algorithms}, which take a given set of real-time network parameters like channel realizations and signal to noise ratio specifications as their inputs, run a {number of} (potentially costly) iterations, and produce the ``optimized'' resource allocation {strategies} as their outputs.
\begin{figure}[htb]
	\begin{minipage}[b]{.45\linewidth}
	 
		\centering
		\centerline{\includegraphics[width=\linewidth]{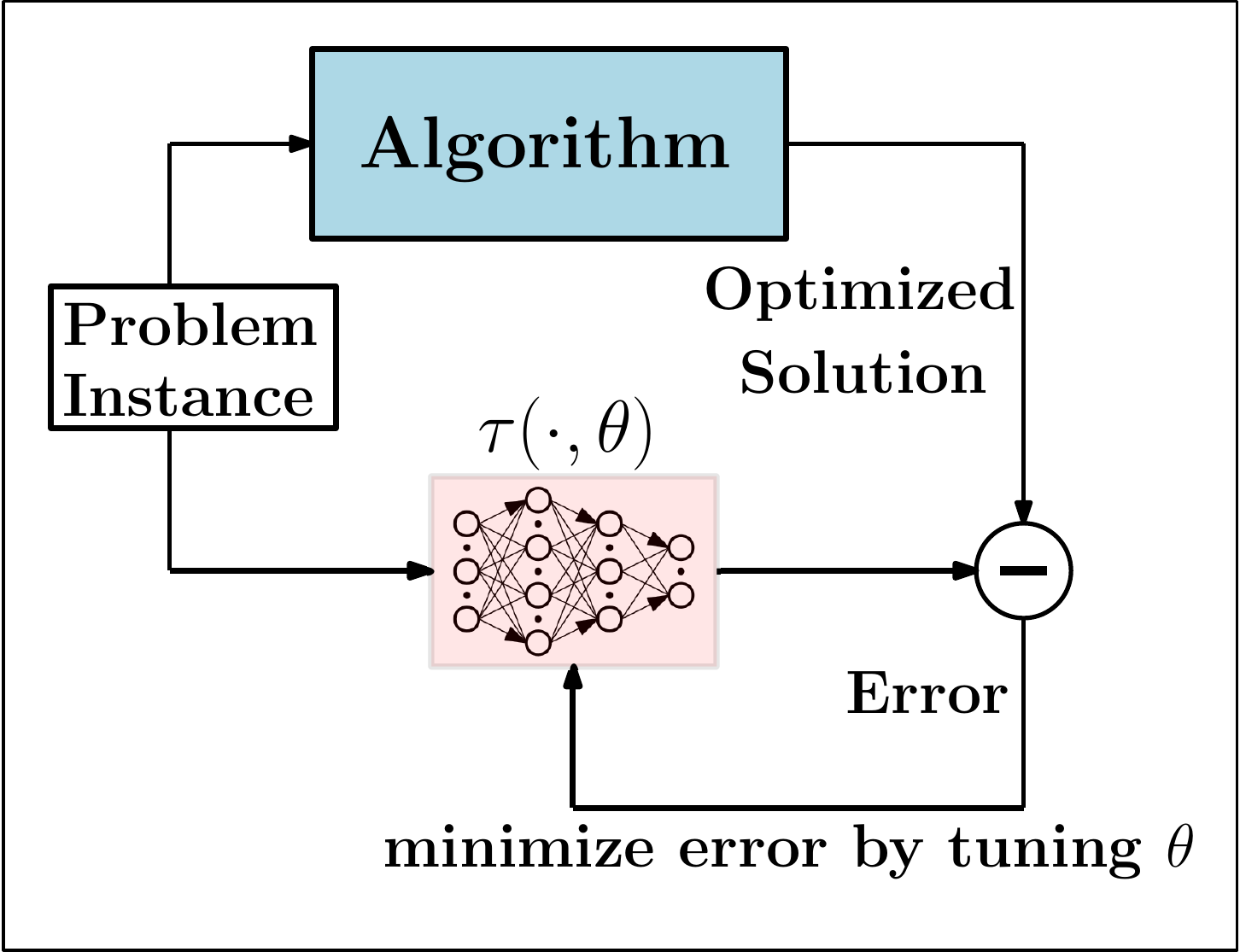}}
		\centerline{\footnotesize  (a) Training Stage} \medskip
	\end{minipage}
	\hfill
	\begin{minipage}[b]{0.5\linewidth}
		\centering
		\centerline{\includegraphics[width=.9\linewidth]{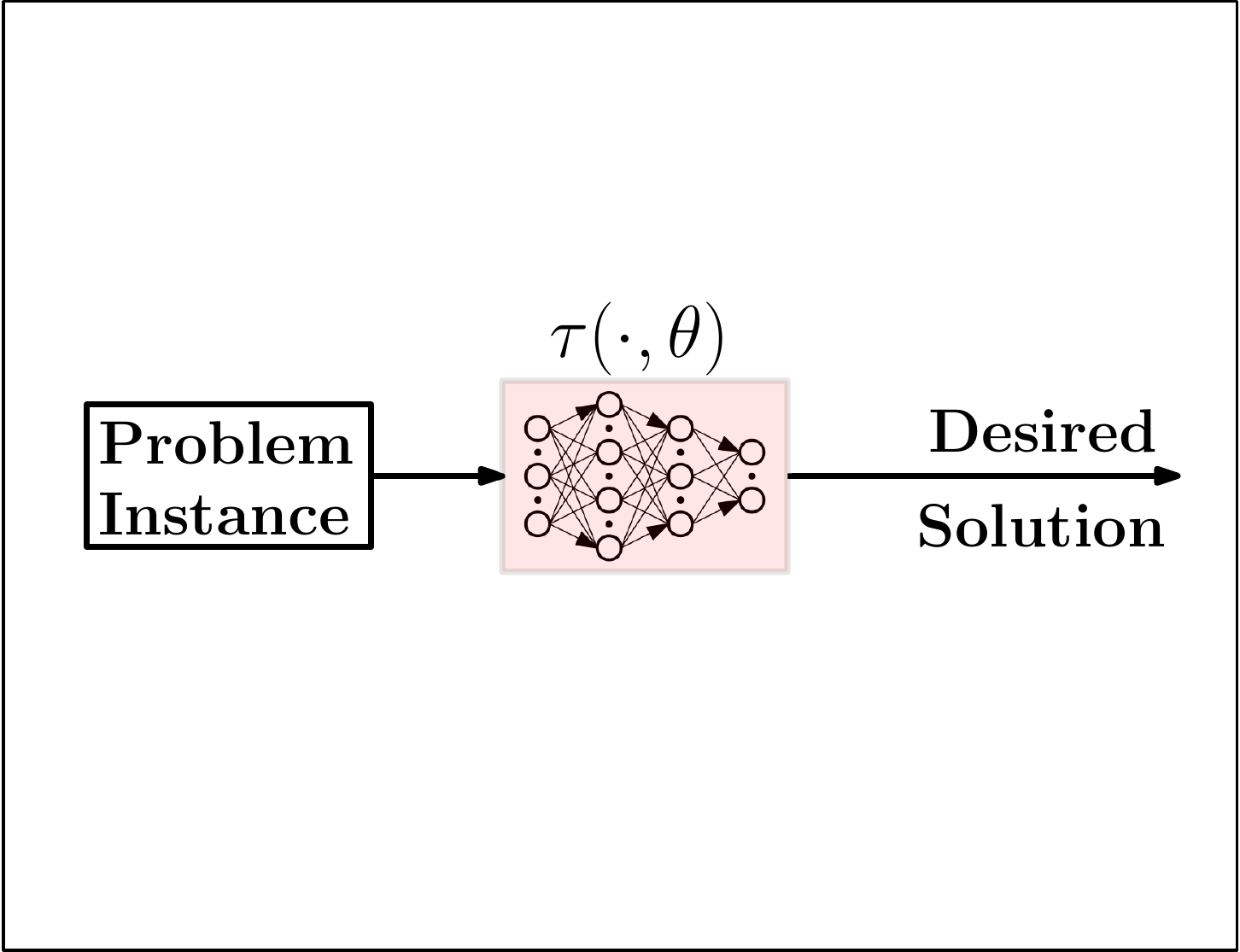}}
		\centerline{ \footnotesize (b) Testing Stage}\medskip
	\end{minipage}	
 
	\caption{\footnotesize The Proposed Method. In the figure $\tau(\cdot,\theta)$ represent a DNN parameterized by $\theta$.}
 
	\label{fig:TrainDNN}
\end{figure}

Despite the fact that excellent performance of many of these algorithms has been observed through numerical simulations and theoretical analysis, 
{implementing them in real systems still faces many serious obstacles. In particular, the high computational cost incurred by these algorithms has been one of the most challenging issues.  For example, WMMSE-type algorithms require complex operations such as matrix inversion and bisection in each iteration \cite{shi11WMMSE_TSP, baligh2014cross}. Water-filling type algorithms such as \cite{yu02b} and interference pricing algorithms \cite{Schmidt09} involve singular value decomposition at each iteration (when dealing with Multiple-Input-Multiple-Output (MIMO) interference systems). Algorithms such as the deflation-based joint power and admission control require successively solving a series of (possibly large-scale) linear programs.  
The computationally demanding nature of these algorithms makes real-time implementation challenging, because wireless resource management tasks are typically executed in a time frame of milliseconds (due to the fast changing system parameters  such as channel conditions, number of users, etc.).

In this work, we propose a new machine learning-based approach for wireless resource management; see Figure \ref{fig:TrainDNN} for an illustration of the system schematics. The main idea is to treat a given resource optimization algorithm as a ``black box", and try to {\it learn} its input/output relation by using a {\emph{deep neural network} (DNN)} \cite{lecun2015deep}. 
{Our motivation is that if a network with several layers can well approximate a given resource management algorithm,
then passing the algorithm input through the network to produce an output can be rather economical in computation -- it only requires several layers of simple operations such as matrix-vector multiplications. Therefore, such an approximation, if accurate, can reduce processing time for wireless resource allocation substantially.
In addition, training such a network is fairly convenient, since training samples are easily obtainable by running the optimization algorithm of interest on simulated data.

In the machine learning community, there have been several attempts of approximating an iterative optimization algorithm using DNNs in recent years. The work in \cite{gregor2010learning} proposed to use a multilayer network to approximate the iterative soft-thresholding algorithm (ISTA) for {sparse optimization} \cite{Beck:2009:FIS:1658360.1658364}. 
In particular, the ISTA algorithm is first ``unfolded" iteration by iteration as a chain of operations, and each of its iterations is mimicked by a layer of the network. Then the parameters of the network are learned by offline training.  The authors of \cite{hershey2014deep,sprechmann2013supervised} applied similar ideas for nonnegative matrix factorization and the alternating direction method of multipliers (ADMM), respectively. Recently, the authors of \cite{andrychowicz2016learning} proposed to learn the per-iteration behavior of gradient based algorithms, and the authors of \cite{li2016learning} proposed to use reinforcement learning for the same task. In a nutshell, these algorithms are more or less based on the same ``unfolding'' idea, which is feasible either because the iterations have simple structure such as in ISTA or due to the fact that gradient information is assumed to be known like in \cite{andrychowicz2016learning} and \cite{li2016learning}. However, for complex algorithms, it is no longer clear whether an iteration can be explicitly modeled using a structured layer of a neural network. Further, it is important to note that none of these existing works have provided rigorous theory to back their approaches -- e.g., it is unclear whether or not algorithms such as ISTA can be accurately approximated using the unfolding idea.

{It is worth noting that there have been a number of works that apply DNN for various communication tasks. 
For example, the recent works in \cite{o2016recurrent,farsad2017detection,west2017deep,o2017deep} have demonstrated promising performance of using DNN in a number of tasks such as anomaly detection and decoding. However, the focus in these papers is signal modeling rather than algorithm approximation.
The very recent work in \cite{samuel2017deep} proposes a DNN architecture to approximate an algorithm for the MIMO detection problem, but the idea, again, follows the unfolding work in \cite{gregor2010learning}.

Our approach is very different from the unfolding idea: We propose to employ a generic fully connected DNN to approximate a wireless resource allocation algorithm, rather than unfolding the algorithms of interest. Unlike the algorithms with relatively simple structure that could be easily unfolded, wireless resource management algorithms often entail computationally heavy iterations involving operations such as matrix inversion, SVD, and/or bi-section. 
Therefore, their iterations are not amenable to approximation by a {\it single} layer of the network. 

Overall, the DNN-based algorithm approximation is an emerging and largely unexplored topic. Therefore its success will be critically dependent on the answer to the following challenging research questions: What kind of optimization algorithms are ``learnable", meaning they can be approximated arbitrarily well by a neural network? How many layers / neurons are needed to approximate a resource allocation algorithm accurately? How robust or sensitive is the method in practice?}

\smallskip

\noindent{\bf Contributions.}  
The main contribution of this work is {three-fold: First, we propose the first deep learning based scheme for real-time resource management over interference-limited wireless networks, which bridges the seemingly unrelated areas of machine learning and wireless resource allocation (in particular, power control over interference networks). 
Second, unlike all existing works on approximating optimization algorithms such as those using unfolding, our approach is justified by rigorous theoretical analysis.
We show that there are conditions under which an algorithm is leanrable by a DNN, and that the approximation error rate is closely related to the `depth' and `width' of the employed DNN.
These theoretical results provide fundamental understanding to the proposed ``learn to optimize" approach. They also offer insights and design guidelines for many other tasks beyond wireless resource allocation. Third, extensive numerical experiments using simulated data and {\it real DSL data} are conducted to demonstrate the achievable performance of the proposed approach under a variety of scenarios. As a proof-of-concept, the results provided in this paper indicate that DNNs have great potential in the real-time management of the wireless resource.

To promote reproducible research, the codes for generating most of the results in the paper are made available on the authors' website: \url{https://github.com/Haoran-S/TSP-DNN}. 

\smallskip

\noindent{\bf Notation.} We use $I_{X}(x)$ to denote the indicator function for a set $X$; the function takes value $1$ if $x\in X$, and its takes value $0$ if $x\notin X$; We use $\mathcal{N}(m, \sigma^2)$ to denote normal distribution with mean $m$ and variance $\sigma^2$. {A Rectified Linear Unite (ReLU)  is defined as $\mbox{ReLU}(x) =\max\{0, x\}$; A binary unit, denoted as 
	$\mbox{Binary}(x)$, returns $1$ if $x$ evaluates as true, and returns $0$ otherwise.} We use $\mathbb{R}_{+}$ to denote the set of all nonnegative numbers. 

\section{Preliminaries}
\subsection{System Model} 
We consider the following basic interference channel (IC) power control problem,  for a wireless network consisting of $K$ single-antenna transceivers pairs.  Let $h_{kk}\in\mathbb{C}$ denote the direct channel between transmitter $k$ and receiver $k$, and $h_{kj}\in\mathbb{C}$ denote the interference channel from  transmitter~$j$ to receiver $k$. All channels are assumed to be constant in each resource allocation slot. Furthermore, we assume that the transmitted symbol of transmitter $k$ is a Gaussian random variable with zero mean and variance $p_k$ (which is also referred to as the transmission power of transmitter $k$). Further, suppose that the symbols from different transmitters are independent of each other. Then the signal to interference-plus-noise ratio (SINR) for each receiver $k$ is given by
$$\mbox{sinr}_k\triangleq \frac{|h_{kk}|^2p_k}{\sum_{j\neq k}|h_{kj}|^2p_j+\sigma_k^2},$$ 
where $\sigma_k^2$ denotes the noise power at receiver $k$.

We are interested in power allocation for each transmitter so that the weighted system throughput is maximized. Mathematically, the problem can be formulated as the following {\it nonconvex} problem  
\begin{equation}\label{eq:sum_rate_IA}
\begin{split}
\max_{p_1,\ldots, p_K}\quad &\sum_{k=1}^K \alpha_k\log\left(1+\frac{|h_{kk}|^2p_k}{\sum_{j\neq k}|h_{kj}|^2p_j+\sigma_k^2}\right)\\
\textrm{s.t.} \quad &0\leq p_k\leq P_{\max},~ \forall \; k=1,2,\ldots, K,
\end{split}
\end{equation}
where $P_{\max}$ denotes the power budget of each transmitter; $\{\alpha_k>0\}$ are the weights. Problem \eqref{eq:sum_rate_IA} is known to be NP-hard \cite{luo08a}. Various power control algorithms have been proposed \cite{shi11WMMSE_TSP,Schmidt09,papand09}, among which the WMMSE algorithm has been very popular. In what follows, we introduce a particular version of the WMMSE applied to solve the power control problem \eqref{eq:sum_rate_IA}.

\subsection{The WMMSE Algorithm} 
The WMMSE algorithm converts the weighted sum-rate (WSR) maximization problem to a higher dimensional space where it is easily solvable, using the well-known MMSE-SINR equality\cite{verdu98}, i.e., $\mbox{mmse}_k=\nicefrac{1}{(1+\mbox{sinr}_k)}$, where ``$\mbox{mmse}_k$" denotes the minimum-mean-squared-error (MMSE) of user $k$. 

Note that the original WMMSE \cite{shi11WMMSE_TSP}  is designed for problems whose design variables are beamformer vectors with complex entries. In this paper, in order to simplify our subsequent implementation of DNN based schemes, we modify the  algorithm so that it can work in the real domain. Specifically, we first observe that the rate function remains unchanged if $h_{kj}$ is replaced by $|h_{kj}|$.  Using this observation, one can show (by following the steps in \cite[Theorem 1]{shi11WMMSE_TSP}) that problem \eqref{eq:sum_rate_IA} is equivalent to the following weighted MSE minimization problem
{\setlength\abovedisplayskip{3pt}
	\setlength\belowdisplayskip{3pt}
 
\begin{equation}\label{eq:virtual_IA_1}
\begin{split}
\min_{\{w_k,u_k,v_k\}_{k=1}^K}\quad &\sum_{k=1}^K \alpha_k\left(w_k e_k-\log(w_k)\right)\\
\textrm{s.t.} \quad &0\leq v_k\leq \sqrt{P_k}, ~k=1,2,\ldots, K.
\end{split}
\end{equation}}
where the optimization variables are all real numbers, and we have defined 
\begin{equation}\label{eq:mse}
e_k =(u_k|h_{kk}|v_k-1)^2+\sum_{j\neq k} (u_k|h_{kj}|v_j)^2+\sigma_k^2u_k^2
\end{equation}
Here by {\it equivalent} we meant that all stationary solutions of these two problems are identical to each other. 

The WMMSE solves \eqref{eq:virtual_IA_1} using the block coordinate descent method \cite{bertsekas97}, i.e., each time optimizing one set of variables while keeping the rest fixed; see Fig. \ref{fig:pseudo_code1} for its detailed steps. It has been shown in \cite{shi11WMMSE_TSP} that the WMMSE is capable of reaching a stationary solution of problem \eqref{eq:sum_rate_IA}. 
Note that since the interfering multiple-access channel (IMAC) can be viewed as a special IC with co-located receivers, the WMMSE algorithm in Fig. 1 can be applied to solving the power allocation problem of IMAC as well \footnote{In the IMAC, we assume that within each cell the BS does not perform successive interference cancellation.}.

\begin{figure}[htbp]
	\centering
	\begin{tabular}{|p{5.2in}|}
		\hline
		~\,1.\quad Initialize $v_k^0$ such that $0\leq v_k^0\leq \sqrt{P_k}$, $\forall~k$;\\
		~\,2.\quad Compute $u_k^0= \frac{|h_{kk}|v_k^0}{\sum_{j=1}^K |h_{kj}|^2(v_j^{0})^2+\sigma_k^2}$, $\forall~k$;\\
		~\,3.\quad Compute $w_k^0=\frac{1}{1-u_k^0|h_{kk}|v_k^0}$, $\forall~k$;\\
		~\,4.\quad Set $t=0$\\
		~\,5.\quad \textbf{repeat}\\
		~\,6.\quad\quad $t=t+1$~~~//iterations\\
		~\,7.\quad\quad Update $v_k$: \\ \quad \quad \quad $v_k^{t}=\left[\frac{\alpha_kw_k^{t-1}u_k^{t-1}|h_{kk}|}{\sum_{j=1}^K \alpha_jw_j^{t-1} (u_j^{t-1})^2|h_{jk}|^2}\right]_{0}^{\sqrt{P_{\max}}}, \; \forall~k$;\\
		~\,8.\quad\quad Update $u_k$: $u_k^t = \frac{|h_{kk}|v_k^t}{\sum_{j=1}^K |h_{kj}|^2(v_j^{t})^2+\sigma_k^2}, \; \forall~k$;\\
		~\,9.\quad\quad Update $w_k$: $w_k^t=\frac{1}{1-u_k^t|h_{kk}|v_k^t}, \; \forall~k$;\\
		10.\quad\textbf{until} Some stopping criteria is met \\
		11.\quad\textbf{output} $p_k = (v_k)^2, \; \forall~k$; \\
		\hline
	\end{tabular}
	\caption{Pseudo code of WMMSE for the scalar IC.} \label{fig:pseudo_code1}
\end{figure}

\section{The Proposed Approach}
In this section, we present the proposed approach that uses DNN to approximate WMMSE. We first present two theoretical results to justify our approach, followed by a detailed description of how the DNN is set up, and how training, validation and testing stages are carried out.

\subsection{Universal Approximation}
At this point, it remains unclear whether a  multi-layer neural network can be used to approximate the behavior of a given iterative algorithm, like WMMSE,  for solving the nonconvex optimization problem \eqref{eq:sum_rate_IA}. The answer to such a question turns out to be nontrivial. 
To see where the challenge is, let us consider the following example. 

\noindent{\bf Example.} We consider approximating the behavior of the classic gradient descent (GD)  for solving the following problem 
\begin{align}\label{eq:example}
\min_{x\in X} f(x)=(x^2-z)^2
\end{align}
where $x$ is the optimization variable and $z$ is the problem parameter.  
The GD  iteration  is given by the following 
$$x^{t+1}=x^t-\alpha \nabla f(x^t) = x^t - 2\alpha  x^t((x^t)^2-z), \quad t=0, \cdots, T$$ 
where $t$ is the iteration number; $x^0$ is the randomly generated initial solution; $\alpha>0$ is the stepsize; $T$ is the total number of iteration. 

Given a set of training data points $\{z^{(i)}, (x^T)^{(i)}\}$, we use a simple three layer neural network to approximate the relationship $z\to x^T$, which characterizes the behavior of  GD. Unfortunately, the network learns a model that  only outputs zero, regardless what the inputs are; see Figure \ref{fig:network} (a). 
\begin{figure}[htb]
	\begin{minipage}[b]{.45\linewidth}
		\centering
		\centerline{\includegraphics[width=\linewidth]{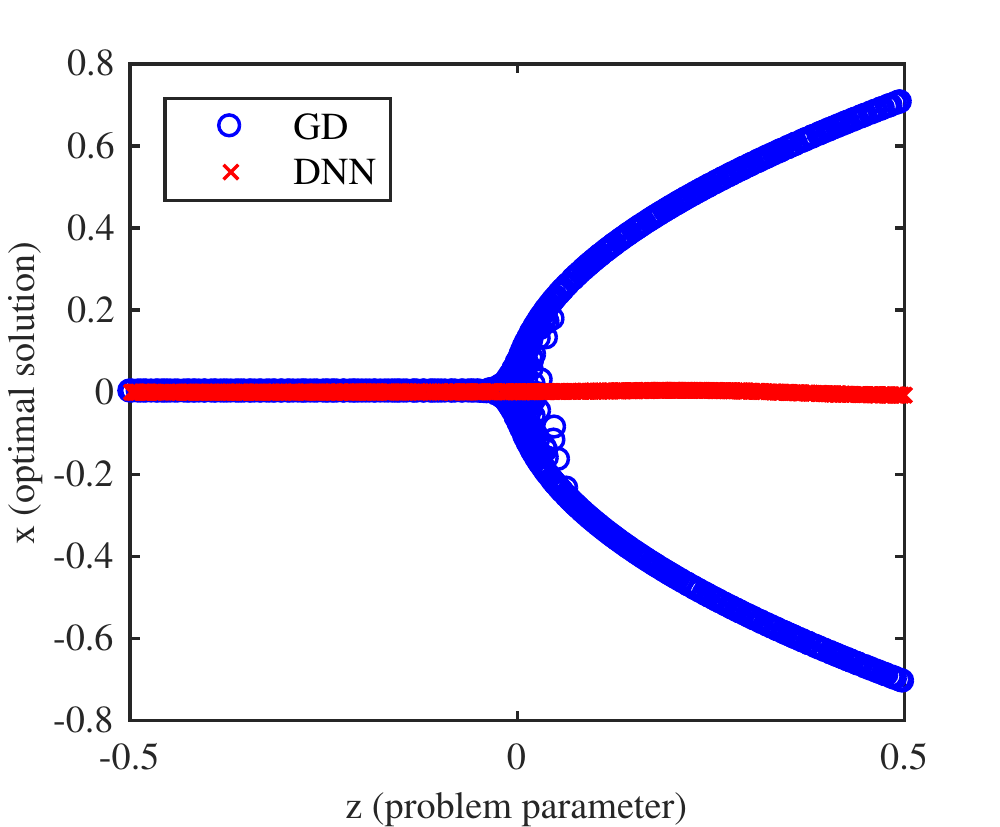}}
	\end{minipage}
	\hfill
	\begin{minipage}[b]{0.45\linewidth}
		\centering
		\centerline{\includegraphics[width=\linewidth]{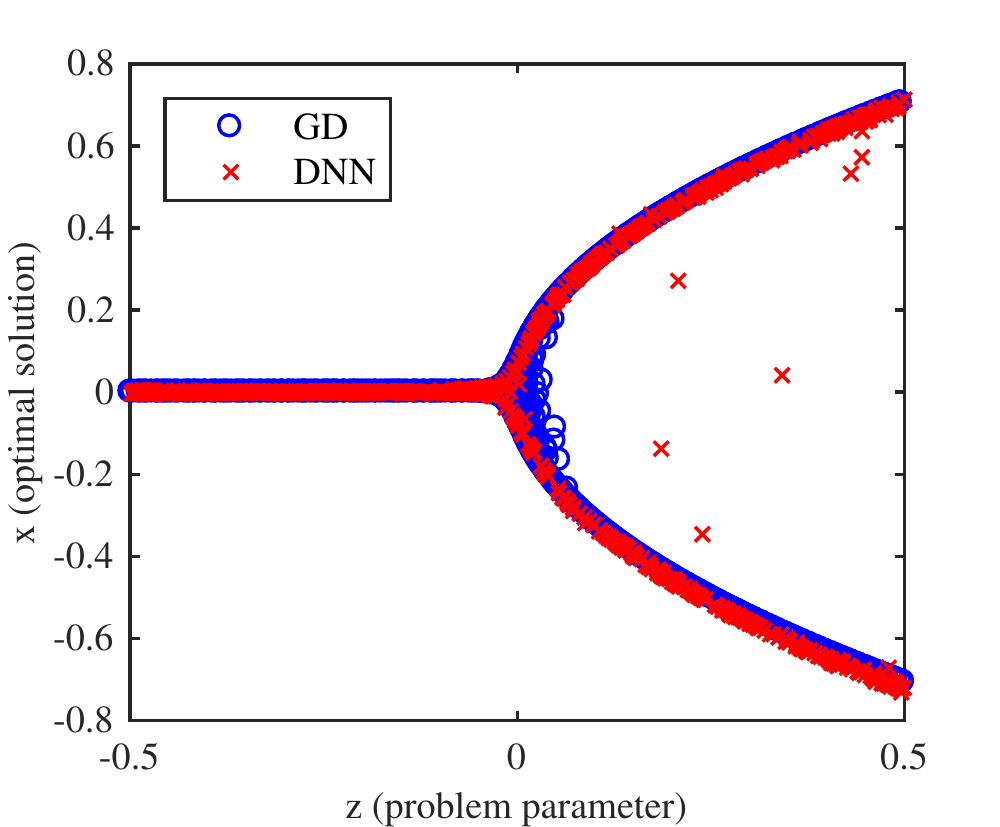}}
	\end{minipage}
 
	\caption{\footnotesize  Illustration of different models learned from the GD iteration for solving problem \eqref{eq:example}. The blue curves in the figures represent the input output relationship for GD iterations (i.e., the training samples); the red curves represent that for the learned neural networks. We have chosen $T=3000$  when generating all training samples. Left: The model learned by approximating the relationship $z\to x^T$; Right: The model learned by approximating the relationship $(x^0,z)\to x^T$}
 
	\label{fig:network}
\end{figure}

The reason that we are not learning the correct behavior of GD is that, problem \eqref{eq:example} is a nonconvex problem with {\it multiple isolated} stationary solutions $\pm\sqrt{z}$ when $z>0$.  Depending on what the initial solution $x^0$ is, the GD can converge to either a positive solution or the negative one. This behavior confuses the neural network, so that only the middle point is learned, which turns out to be a non-interesting solution. 

To resolve such an issue, we consider learning a new relationship $(x^0,z)\to x^T$ by using the same three layer network. Interestingly, now the behavior of GD can be accurately approximated, and the results are shown in Figure \ref{fig:network} (b). \hfill $\blacksquare$

The above example indicates that a rigorous theoretical analysis is needed to characterize how we should use neural networks to learn optimization algorithms.  
To begin our analysis, let  $NET_N(x^0,z)$ denote the output of a feedforward network, which has one hidden layer with $N$ sigmoid activation functions, and has $(x^0, z)$ as input. Then consider a {\it finite step} iterative algorithm, in which the relationship between $z^t$ and $z^{t+1}$ (i.e., its $t^{\rm th}$ and ${t+1}^{\rm th}$ iterates) is given by
\begin{equation}
x^{t+1} = f^t(x^t, z) , \; t=0,1,2,\ldots,T
\end{equation}
where $f^t$ is a continuous mapping representing the $t^{\rm th}$ iteration;  $z \in Z$ is the problem parameter; $x^t, x^{t+1}\in X$ where $Z$ and $X$ represent the parameter space and the feasible region of the problem, respectively. 
We have the following result as an extension of the well-known universal approximation theorem for multilayer feedforward networks  \cite{hornik1989multilayer}.
\begin{definition}
	Denote $i^{th}$ coordinate of $y$ as $y_i$, a function $y=f(x), x \in R^m, y \in R^n$ is a continuous mapping when $y_i = f_i(x)$ is a continuous function for all $i$.
\end{definition}
 
\begin{theorem}\label{thm:app} (Universal approximation theorem for iterative algorithm)
	Suppose that $Z$ and  $X$ are certain compact sets. Then the mapping from
	problem parameter $z$ and initialization $x^0$ to final output $x^T$, i.e.,
	\begin{equation*} 
	x^T = f^T(f^{T-1}(\ldots f^1(f^0(x^0,z),z)\ldots,z),z) \triangleq F^T(x^0,z)
	\end{equation*}
	can be accurately approximated by $NET_N(x^0,z)$ (which represents an N hidden units network with sigmoid activation function) in the following sense: For any given error $\epsilon>0$, there exists a positive constant $N$ large enough such that 
	
	\begin{equation} \label{eq:error}
	\sup_{(x^0,z) \in X \times Z}\|NET_N(x^0,z) - F^T(x^0,z)\|\le \epsilon.
	\end{equation}
\end{theorem}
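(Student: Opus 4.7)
The plan is to reduce the statement to the classical universal approximation theorem of Hornik (1989) by first establishing that the composed iteration map $F^T(x^0,z)$ is a continuous function on the compact domain $X\times Z$. Once continuity on a compact set is in hand, a single-hidden-layer sigmoid network can approximate it uniformly to arbitrary accuracy, which is exactly what \eqref{eq:error} asserts.

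First I would set up the induction carefully. Define the partial compositions
\begin{equation*}
F^0(x^0,z) := f^0(x^0,z), \qquad F^{t}(x^0,z) := f^{t}\bigl(F^{t-1}(x^0,z),\,z\bigr), \quad t=1,\dots,T.
\end{equation*}
Each $f^t : X \times Z \to X$ is continuous by hypothesis (continuity being interpreted coordinatewise as in the Definition just stated). I would then argue by induction on $t$ that $F^t : X\times Z \to X$ is continuous: the base case is immediate, and the inductive step uses the fact that $(x^0,z) \mapsto \bigl(F^{t-1}(x^0,z),z\bigr)$ is continuous from $X\times Z$ into $X\times Z$ (since the first component is continuous by induction and the second is just the projection), and composing with the continuous map $f^t$ preserves continuity. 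Therefore $F^T$ is continuous on the compact product $X\times Z$.

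Next, I would invoke the classical universal approximation theorem of Hornik, Stinchcombe, and White. It asserts that for any continuous function $g:K\to\mathbb{R}^n$ on a compact set $K\subset\mathbb{R}^m$, and any $\epsilon>0$, there exists an integer $N$ and a one-hidden-layer feedforward network with $N$ sigmoid units, call it $NET_N$, such that $\sup_{u\in K}\|NET_N(u)-g(u)\|\le\epsilon$. Applying this with $K = X\times Z$, $u=(x^0,z)$, and $g=F^T$ yields \eqref{eq:error} directly.

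The only real subtlety — and thus the main thing to be careful about rather than a serious obstacle — is verifying that every intermediate iterate $F^{t-1}(x^0,z)$ genuinely lies in the set $X$ on which $f^t$ is defined and continuous; this is guaranteed by the assumption that each $f^t$ maps into the feasible region $X$, so the composition is well-defined and the inductive continuity argument goes through without having to enlarge the domain. A secondary point worth mentioning in passing is that the Hornik theorem is stated for scalar outputs, so one applies it coordinatewise to each component $F^T_i$ of $F^T$ with error budget $\epsilon/\sqrt{n}$ (where $n$ is the output dimension) and sums the network outputs into a single vector-valued $NET_N$; this preserves the single-hidden-layer structure and yields the claimed uniform bound.
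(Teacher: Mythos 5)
Your proposal is correct and follows essentially the same route as the paper's own proof: establish continuity of $F^T$ as a composition of continuous maps, invoke Hornik's universal approximation theorem coordinatewise with error budget $\epsilon/\sqrt{L}$, and stack the scalar networks into one vector-valued single-hidden-layer network. Your inductive setup for the continuity step is slightly more explicit than the paper's one-line argument, but the substance is identical.
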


\noindent{\bf Proof.} The proof is relatively straightforward. Since every $f^t$ is a continuous mapping and composition of continuous mappings is still a continuous mapping, we have for any $i$, the $i^{\rm th}$ coordinate of $x^T$ is a continuous function of $(x^0,z)$, denote by $x^T_i = F^T_i(x^0,z)$. From \cite[Theorem 2.1]{hornik1989multilayer}, for any compact set $X \times Z$ and any $\delta_i > 0$, there exist $N_i$ such that

\begin{equation}
	\sup_{(x_0,z) \in X \times Z} |NET_{N_i}(x_0,z) - F^T_i(x_0,z)|\le \delta_i.
\end{equation}

Suppose $x^T$ has dimension $L$. Then by stacking $L$ of the above networks together, we can construct a larger network $\hat{x} = NET_N(x_0,z)$ with $N = \sum_{i}^L N_i$. Given $\epsilon$, we can choose $\delta_i=\frac{\epsilon}{\sqrt{L}}$, then we obtain (\ref{eq:error}). \hfill
$\blacksquare$

\begin{remark}
	Our result reveals that for any deterministic algorithm whose iterations represent continuous mappings, we can  include its initialization as an {\it additional input feature}, and learn the algorithm behavior by a well trained neural network. 
	If the underlying optimization problem is nonconvex [as the WSR maximization problem \eqref{eq:sum_rate_IA}], having the  initialization as an input feature is necessary, since without it the mapping $(z)\to x^T$ is not well defined (due to the possibility of the algorithm converging to multiple isolated solutions). Alternatively, it is also possible to learn the mapping $z\to x^T$, for a {\it fixed} initialization $x^0$. 

\end{remark}

\begin{remark}
{It can be verified that each iteration of  WMMSE represents a continuous mapping, and that the optimization variable lies in a compact set. Therefore, by assuming that the set of channel realizations $\{h_{ij}\}$ lies in a compact set,  using Theorem \ref{thm:app} we can conclude that WMMSE can be approximated arbitrarily well by a  feedforward network with a single hidden layer.  }  
\hfill $\blacksquare$
\end{remark}

\subsection{Approximation Rate of WMMSE via DNN}

Our theoretical result in the previous section indicates that a large class of algorithms, including WMMSE, can be approximated  very well by a neural network with three layers. However, such a network may not be implementable because it is not clear whether a {\it finite-size} network is able to achieve a high quality approximation. Therefore,  a more relevant and challenging question is: can we rigorously characterize the network size (i.e., the number of layers and total number of neurons) required to achieve a given approximation error?

In this section, we show that the answer is affirmative.
We will use the WMMSE algorithm as an example, and provide sharper approximation guarantees than what has been obtained in Theorem 1. In particular, we show that under certain network structure, the size of the network scales with the approximation error $\epsilon$ in the rate of $\mathcal{O}(\log(1/\epsilon))$.

The main steps of the analysis is as follows: {\bf S1)} Construct simple neural networks that consist of ReLUs and Binary Units  to approximate multiplication and division operations; {\bf S2)} Compose these small neural networks to approximate a rational function representing one iteration of the algorithm; {\bf S3)} Concatenate these rational functions to approximate the entire algorithm; {\bf S4)} Bounding the error propagated from the first iteration to the last one.  
All the proofs in this section are delegated to the Appendix.

Our first two lemmas construct neural networks that can approximate multiplication and division operations. 
\begin{lemma} \label{lemmama}
	For any two positive numbers $X_{\max}, Y_{\max}$, define
	\begin{align*}
	S &:= \{(x,y)\mid  0 \leq x \leq X_{\max}, 0 \leq y \leq Y_{\max} \}\\
	m &:= \lceil \log(X_{\max}) \rceil.
	\end{align*}
	There exists a multi-layer neural network with input $(x,y)$ and output $NET(x,y)$ satisfying the following relation 
	\begin{equation}
	\max_{(x,y) \in S}|xy - NET(x,y)| \leq \frac{Y_{max}}{2^n},
	\end{equation}
	where the network has $O(m+n)$ layers and $O(m+n)$ binary units and ReLUs.
\end{lemma}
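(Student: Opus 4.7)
My plan is to reduce the multiplication $xy$ to a weighted sum of gated copies of $y$ by working with a truncated binary expansion of $x$. Specifically, I would fix
\begin{equation*}
\tilde x \;:=\; \sum_{i=-n}^{m} 2^i\, b_i,\qquad b_i\in\{0,1\},
\end{equation*}
chosen so that $|x-\tilde x|\le 2^{-n}$, and design the network to output \emph{exactly} $\tilde x\, y=\sum_{i=-n}^{m} 2^i b_i y$. Then the overall error is controlled by
\begin{equation*}
|xy - \tilde x y| \;=\; (x-\tilde x)\, y \;\le\; \frac{Y_{\max}}{2^n},
\end{equation*}
which matches the bound in the statement. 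What remains is to implement the bit-extraction of $\tilde x$ and the bit-wise gating of $y$ using only $O(m+n)$ binary units, $O(m+n)$ ReLUs, and $O(m+n)$ layers.

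\textbf{Bit extraction.} The first stage is a cascade of $m+n+1$ steps that extracts $b_m, b_{m-1},\ldots,b_{-n}$ from $x$. Set $r_{m+1}:=x$, and for $i=m, m-1,\ldots,-n$ define
\begin{equation*}
b_i \;:=\; \mbox{Binary}(r_{i+1}\ge 2^i),\qquad r_i \;:=\; r_{i+1}-2^i\, b_i.
\end{equation*}
Each step uses a single binary unit (comparing the affine quantity $r_{i+1}-2^i$ to zero) followed by a linear combination (forming $r_i$); by induction $r_i\in[0,2^i)$, so the recursion reproduces the standard binary expansion of $x$ truncated at level $2^{-n}$. The scalar $y$ is simultaneously carried through the cascade via identity weights, which is possible using a single pass-through ReLU per layer since $y\ge 0$. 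This stage uses $O(m+n)$ layers and $O(m+n)$ binary units.

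\textbf{Gating, sum, and error bound.} Once $\{b_i\}_{i=-n}^m$ and $y$ are available at a common layer, I would use the identity
\begin{equation*}
b_i\, y \;=\; \mbox{ReLU}\bigl(y+Y_{\max} b_i-Y_{\max}\bigr),
\end{equation*}
which holds whenever $b_i\in\{0,1\}$ and $y\in[0,Y_{\max}]$: the argument equals $y\ge 0$ when $b_i=1$ and equals $y-Y_{\max}\le 0$ when $b_i=0$. The output
\begin{equation*}
NET(x,y) \;:=\; \sum_{i=-n}^{m} 2^i\cdot \mbox{ReLU}\bigl(y+Y_{\max} b_i-Y_{\max}\bigr)
\end{equation*}
then equals $\tilde x\, y$ exactly, and the desired error bound follows from the calculation above. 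The gating stage adds $O(m+n)$ ReLUs and a single output layer.

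\textbf{Main obstacle.} The delicate point is the bookkeeping inside the cascade: one must route both the running residual $r_{i+1}$ (needed to extract $b_i$) and the input $y$ (needed in the final gating) through every intermediate layer using only affine combinations and pass-through activations, while keeping the total depth and total activation count both linear in $m+n$. A secondary subtlety is the boundary case $r_{i+1}=2^i$, where the binary unit fires exactly on the threshold; this merely picks one branch of the binary expansion and does not affect the error bound since we only require $|x-\tilde x|\le 2^{-n}$. With this routing and boundary handling verified, the construction delivers the stated $O(m+n)$ depth, $O(m+n)$ units, and error at most $Y_{\max}/2^n$.
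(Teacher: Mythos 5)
Your proposal is correct and follows essentially the same route as the paper's proof: truncate the binary expansion of $x$ to bits $2^{-n},\ldots,2^m$ via a cascade of binary units and residual subtractions, gate $y$ by each bit with a ReLU of the form $\max(y + M(b_i-1),0)$ for a sufficiently large offset $M\ge Y_{\max}$, sum with weights $2^i$, and bound the error by $(x-\tilde x)y\le Y_{\max}/2^n$. The only differences are cosmetic (the paper extracts bits by specializing its division construction to $y=1$ and folds the factor $2^i$ inside the ReLU), and your explicit discussion of routing $y$ through the cascade and of the threshold boundary case is a detail the paper leaves implicit.
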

\begin{lemma} \label{lemmada}
	For any positive number $Y_{\max}$ $Z_{\max}$, define  
	\begin{align*}
	S & := \left\{(x,y)\mid {x}/{y} \leq Z_{\max}, x \geq 0, \; 0 < y \leq Y_{max}\right\}\\
	m & := \lceil \log(Z_{\max}) \rceil.
	\end{align*}
	There exists a neural network with input $(x,y)$ and ouput $ NET(x,y)$  satisfying the following relation 
	\begin{equation}
	\max_{(x,y) \in S}\;  \left|\frac{x}{y} - NET(x,y) \right| \leq \frac{1}{2^n},
	\end{equation}
	where the network has $O(m+n)$ layers and $O(m+n)$ binary units and ReLUs.
\end{lemma}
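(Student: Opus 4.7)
The plan is to implement binary long division as a feedforward network, computing one bit of the quotient per ``gadget'' layer. Since $x/y \le Z_{\max} \le 2^m$ by hypothesis, the truncated binary expansion
\[
\widetilde z \;=\; \sum_{i=-n}^{m-1} b_i 2^i, \qquad b_i \in \{0,1\},
\]
satisfies $|x/y - \widetilde z| \le 2^{-n}$ provided the $b_i$'s are chosen by the standard long-division rule. The approximation problem therefore reduces to realizing the long-division procedure by a network of the prescribed size.

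To produce the long-division rule, I would maintain a running remainder $r_{i+1}$ (with the initialization $r_m := x$) and, scanning $i$ from $m{-}1$ down to $-n$, at each step (i) feed the affine quantity $r_{i+1} - 2^i y$ into a binary unit to obtain $b_i = \mbox{Binary}(r_{i+1} - 2^i y)$ and (ii) update
\[
r_i \;=\; r_{i+1} - b_i \cdot (2^i y).
\]
A routine induction using $x/y \le 2^m$ shows $0 \le r_k \le 2^k y$ for every $k$, from which the terminal estimate $|x/y - \widetilde z| = r_{-n}/y \le 2^{-n}$ follows immediately. The only non-linearity in each gadget besides the binary unit is the product $b_i \cdot (2^i y)$, which I realize by a single ReLU through the identity
\[
b_i \cdot (2^i y) \;=\; \mbox{ReLU}\!\left( 2^i y + 2^i Y_{\max}(b_i - 1)\right),
\]
valid whenever $0 \le y \le Y_{\max}$ (this is precisely where the upper bound $y \le Y_{\max}$ defining $S$ is used: if $b_i = 1$ the argument equals $2^i y \ge 0$, and if $b_i = 0$ it equals $2^i(y - Y_{\max}) \le 0$). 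Forwarding the scalars $r_{i+1}$ and $y$ through the network is accomplished by constant-width ReLU copies $z = \mbox{ReLU}(z)$, since both quantities stay non-negative along the invariant.

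Chaining the gadgets yields a network of depth $O(m+n)$ employing $O(1)$ binary units and $O(1)$ ReLUs per gadget, hence $O(m+n)$ binary units and ReLUs in total, followed by a final linear readout $NET(x,y) = \sum_{i=-n}^{m-1} b_i 2^i$. The main technical obstacle I anticipate is keeping the ``state-forwarding'' of $r_{i+1}$ and $y$ through the layers inside the promised $O(m+n)$ budget while still realizing the binary-times-bounded-real product by a single ReLU; once this gadget is set up correctly, the error bound reduces to the long-division invariant and the resource count is routine bookkeeping.
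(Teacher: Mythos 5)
Your proposal is correct and follows essentially the same route as the paper: both realize the truncated binary expansion of $x/y$ by long division, computing each quotient bit with a binary unit applied to (remainder $-\ 2^i y$) and updating the remainder via a single ReLU of the form $\max(2^i y + M(b_i-1),0)$ that exploits the bound $y \le Y_{\max}$. Your explicit remainder invariant $0 \le r_k \le 2^k y$ is a nice touch that the paper leaves implicit; the only (immaterial) difference is that the paper's expansion also includes the bit at position $m$.
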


We note that in \cite{whydnn} the authors have shown that the function $z = x^2$ can be approximated by a neural network with ReLU units and binary units. 
Our construction of $xy$ follows that in \cite{whydnn}, while the construction of the division operation $x/y$ appears to be new.

At this point, it appears that we can concatenate a series of `division' and `multiplication' networks to approximate any rational function. 
However, care has to be exercised here since when performing composition, the approximation errors from the inner functions will {\it propagate} to the outer ones. It is essential to control these errors in order to obtain the overall approximation bounds. Below we provide analysis to characterize  such an error propagation process.
\begin{lemma} \label{lemmaed}
	Given positive numbers $Y_{\min}$, $Z_{\max}$, $\epsilon_1$ and $\epsilon_2$. Suppose that $(x,y)$ satisfies the following relations
	$$0 < Y_{\min} < y - \epsilon_2, \; 0 \leq  \epsilon_1 \leq x, \; 0 \leq \epsilon_2 \leq y , \; {x}/{y} \leq Z_{\max}.$$
	Then we have 
	\begin{align} \label{diverrorprop}
	&\left|\frac{x}{y} - \frac{x \pm \epsilon_1}{y \pm \epsilon_2}\right|  \leq \frac{Z_{\max} + 1}{Y_{\min}} \max(\epsilon_1,\epsilon_2).
	\end{align}
	
\end{lemma}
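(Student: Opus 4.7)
The plan is to write $\dfrac{x}{y} - \dfrac{x \pm \epsilon_1}{y \pm \epsilon_2}$ as a single fraction, apply the triangle inequality to the numerator, and then reduce each resulting term using the two hypotheses $x/y \le Z_{\max}$ and $y \pm \epsilon_2 > Y_{\min}$. The whole argument is essentially a one-line algebraic manipulation, so I expect no real obstacle beyond being careful to handle both sign choices uniformly.

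Concretely, I would first place the two fractions over the common denominator $y(y \pm \epsilon_2)$. A direct expansion shows the numerator becomes $\pm(x\epsilon_2 \mp \epsilon_1 y)$, and in either sign choice the triangle inequality gives
$$\left|\frac{x}{y} - \frac{x \pm \epsilon_1}{y \pm \epsilon_2}\right| \;\le\; \frac{x\epsilon_2 + \epsilon_1 y}{y\,(y \pm \epsilon_2)} \;=\; \frac{x}{y}\cdot\frac{\epsilon_2}{y \pm \epsilon_2} + \frac{\epsilon_1}{y \pm \epsilon_2}.$$
Next I would use the hypothesis $Y_{\min} < y - \epsilon_2$ (together with $\epsilon_2 \ge 0$) to get the uniform lower bound $y \pm \epsilon_2 > Y_{\min}$: for the minus case this is the hypothesis itself, and for the plus case it follows from $y + \epsilon_2 \ge y > Y_{\min} + \epsilon_2 \ge Y_{\min}$. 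Invoking $x/y \le Z_{\max}$ then bounds the two terms above by $Z_{\max}\epsilon_2/Y_{\min}$ and $\epsilon_1/Y_{\min}$, respectively, and replacing each $\epsilon_i$ by $\max(\epsilon_1,\epsilon_2)$ collapses the sum to $(Z_{\max}+1)\max(\epsilon_1,\epsilon_2)/Y_{\min}$, which is the claim.

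The only point that requires any attention is the uniform lower bound on the perturbed denominator; this is precisely why the hypothesis is phrased as $Y_{\min} < y - \epsilon_2$ rather than the weaker $y > Y_{\min}$, since the stronger form is what prevents the quotient from blowing up when the denominator is perturbed downward by $\epsilon_2$. Everything else is routine.
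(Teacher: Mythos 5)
Your proposal is correct and follows essentially the same route as the paper's proof: combine the two fractions over the common denominator $y(y\pm\epsilon_2)$, bound the numerator by $x\epsilon_2+y\epsilon_1$ via the triangle inequality, lower-bound the perturbed denominator by $Y_{\min}$ using the hypothesis $Y_{\min}<y-\epsilon_2$, and invoke $x/y\le Z_{\max}$. The only difference is that you spell out the sign cases for the denominator bound, which the paper leaves implicit.
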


\begin{lemma} \label{lemmaem}
	Suppose positive numbers $X_{\max}$, $Y_{\max}$, $\epsilon_1$ and $\epsilon_2$ satisfy the following relations 
	\begin{align}
		&0 \leq x \leq X_{\max},\quad 0 \leq y \leq Y_{\max},\\
		& \max(\epsilon_1,\epsilon_2) \leq \max(X_{\max},Y_{\max}),  \; \epsilon_1,\epsilon_2 \geq 0.
	\end{align}
	Then we have
	\begin{align} \label{mulerrorprop}
	&|xy - (x \pm \epsilon_1)(y \pm \epsilon_2)|  \\
	\leq & 3\max(X_{\max},Y_{\max}) \max(\epsilon_1 ,  \epsilon_2).
	\end{align}
	
\end{lemma}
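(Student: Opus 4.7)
The plan is to reduce the claim to a direct algebraic expansion followed by term-by-term bounding, since both the hypotheses and conclusion suggest that no sophisticated machinery is required---the lemma is essentially a statement about how the ``cross terms'' in $(x \pm \epsilon_1)(y \pm \epsilon_2)$ accumulate.

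First I would expand the product as
\begin{equation*}
(x \pm \epsilon_1)(y \pm \epsilon_2) = xy \pm x\epsilon_2 \pm y\epsilon_1 \pm \epsilon_1\epsilon_2,
\end{equation*}
and then apply the triangle inequality to obtain
\begin{equation*}
|xy - (x \pm \epsilon_1)(y \pm \epsilon_2)| \leq x\epsilon_2 + y\epsilon_1 + \epsilon_1\epsilon_2.
\end{equation*}
At this point the remaining work is to bound each of the three summands by $\max(X_{\max}, Y_{\max}) \max(\epsilon_1, \epsilon_2)$. The first two are immediate from $x \leq X_{\max}$ and $y \leq Y_{\max}$, together with $\epsilon_i \leq \max(\epsilon_1, \epsilon_2)$.

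The only nontrivial summand is the quadratic cross term $\epsilon_1\epsilon_2$, and this is precisely what the extra hypothesis $\max(\epsilon_1,\epsilon_2) \leq \max(X_{\max}, Y_{\max})$ is designed to handle. I would bound
\begin{equation*}
\epsilon_1 \epsilon_2 \leq \max(\epsilon_1,\epsilon_2)^2 \leq \max(X_{\max}, Y_{\max}) \cdot \max(\epsilon_1,\epsilon_2),
\end{equation*}
which plugs in cleanly. Summing the three bounds yields the desired factor of $3$, completing the argument. There is no real obstacle here; the point of the lemma is simply to record, in a form convenient for later composition with Lemma~\ref{lemmama} and Lemma~\ref{lemmaed}, how a multiplicative neural-network block propagates input errors, and in particular that the propagation is linear (rather than quadratic) in $\max(\epsilon_1,\epsilon_2)$ as long as the perturbations stay within the range of the inputs.
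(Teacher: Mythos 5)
Your proposal is correct and follows essentially the same route as the paper's proof: expand the product, apply the triangle inequality to get the bound $x\epsilon_2 + y\epsilon_1 + \epsilon_1\epsilon_2$, and bound each term by $\max(X_{\max},Y_{\max})\max(\epsilon_1,\epsilon_2)$ using the stated hypotheses. Your version simply spells out the term-by-term bounding in more detail than the paper does.
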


Using the above four lemmas, we can obtain our main theoretical result in this section.  To concisely state the result, let us make the following definitions. Given an input channel vector $h:=\{h_{ij}\}\in\mathbb{R}^{K^2}$, let us use $v(h)^t_i$ to denote the variable $v_i$ at $t^{\rm th}$ iteration generated by WMMSE. Also let $H_{\min},H_{\max}>0$ denote the minimum and maximum channel strength, and let $V_{\min}>0$ be a given positive number. 
\begin{theorem}\label{thm:app_wmmse}
Suppose that WMMSE is randomly
initialized with $ (v^0_k)^2 \leq P_{\max},\ \sum_{i=1}^K v(h)^0_i \geq V_{\min} ,$ 
and it is executed for $T$ iterations.  Define the following set of `admissible' channel realizations
$$\mathcal{H} := \bigg\{h \mid H_{\min} \leq |h_{jk}| \leq H_{\max}, \forall j,k, \; {\sum_{i=1}^K v(h)^t_i \geq { V_{\min}}}, \forall t  \bigg\}.$$
Given $\epsilon>0$, there exists a neural network with $h\in\mathbb{R}^{K^2}$ {and $v^0\in\mathbb{R}_{+}^{K}$} as input and  $NET(h, v^0)\in\mathbb{R}_{+}^{K}$ as output, with the following number of layers{\small
\begin{align*}
O\bigg(&T^2\log\bigg(\max\bigg(K,P_{\max},H_{\max},\frac{1}{\sigma},\frac{1}{H_{\min}},\frac{1}{P_{\min}}\bigg)\bigg)+T\log\bigg(\frac{1}{\epsilon}\bigg)\bigg)
\end{align*}}
and the following number of  ReLUs and binary units {\small
\begin{align*}
O\bigg(T^2K^2\log\bigg(\max\bigg(K,P_{\max},H_{\max},\frac{1}{\sigma},\frac{1}{H_{\min}},\frac{1}{P_{\min}}\bigg)\bigg) + TK^2\log\bigg(\frac{1}{\epsilon}\bigg)\bigg),
\end{align*}}
such that the relation  below holds true
\begin{equation}
\max_{h \in \mathcal{H}}\max_{i} |(v(h)_i^T)^2 - NET(h,v^0)_i| \leq \epsilon
\end{equation}
\end{theorem}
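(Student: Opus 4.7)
The plan is to execute the four-step program (S1)--(S4) outlined before the theorem statement, using Lemmas~\ref{lemmama}--\ref{lemmaem} as the workhorses. First, I will observe that every single WMMSE iteration, as spelled out in lines~7--9 of Fig.~\ref{fig:pseudo_code1}, is built entirely from additions, multiplications, divisions, squarings, and the clipping $[\cdot]_0^{\sqrt{P_{\max}}}$. Additions and clippings are exactly representable by a constant-depth ReLU layer, so only the multiplicative parts need Lemmas~\ref{lemmama}--\ref{lemmaem}. I would fix notation for one iteration as a rational map $(w^{t-1},u^{t-1},v^{t-1},h)\mapsto (w^t,u^t,v^t)$, and decompose each coordinate update into a sequence of at most $O(K)$ elementary binary multiplications/divisions, summed up by a final linear combination.

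Next, I would build the per-iteration subnetwork. To apply the multiplication and division lemmas, I need to exhibit, for every intermediate quantity, upper bounds (and, for denominators, positive lower bounds) that depend only on $K,P_{\max},H_{\max},1/\sigma,1/H_{\min}$ and the admissibility constant $V_{\min}$. Concretely, I will induct on the iteration index $t$ to show that $v^t_k\in[0,\sqrt{P_{\max}}]$, $u^t_k$ is bounded by $H_{\max}\sqrt{P_{\max}}/\sigma^2$, $w^t_k$ is bounded via the MMSE identity $w_k^t=1/(1-u^t_k|h_{kk}|v^t_k)=1+\mathrm{sinr}_k^t$, and that each denominator appearing in Fig.~\ref{fig:pseudo_code1} is bounded below in magnitude by a quantity $Y_{\min}$ that is polynomial in $\sigma,H_{\min},V_{\min}$. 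These bounds determine the parameters $(X_{\max},Y_{\max},Y_{\min},Z_{\max})$ fed into Lemmas~\ref{lemmama} and~\ref{lemmada}, and hence set the \emph{width} $O(m+n)$ and \emph{depth} $O(m+n)$ of each arithmetic gadget, where $m=O(\log(\max(K,P_{\max},H_{\max},1/\sigma,1/H_{\min},1/V_{\min})))$ and $n$ is the per-operation precision parameter I get to choose.

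Once one iteration is approximated, I would apply Lemmas~\ref{lemmaed} and~\ref{lemmaem} to quantify how an input perturbation of size $\delta_{t-1}$ (in every coordinate) propagates to an output perturbation of size $\delta_t$. Each lemma contributes a constant factor of the form $C:=O(\max(1,Z_{\max})/Y_{\min})$, and each iteration composes $O(K)$ of them, so that $\delta_t\le (C')^{K}\delta_{t-1}+O(1/2^n)$ for a constant $C'$ depending polynomially on the bounds above. Unrolling the recursion over $T$ iterations gives $\delta_T\lesssim (C')^{KT}/2^n$. To force $\delta_T\le\epsilon$, it suffices to pick the per-operation precision $n=\Theta(KT\log C'+\log(1/\epsilon))$; absorbing the $K$ factor into the $O(\cdot)$ that multiplies the neuron count, this precisely yields the claimed depth $O(T^2\log(\mathrm{max})+T\log(1/\epsilon))$ (there are $T$ iterations, each contributing an inner depth $O(T\log(\mathrm{max})+\log(1/\epsilon))$) and neuron count $O(T^2K^2\log(\mathrm{max})+TK^2\log(1/\epsilon))$ (there are $O(TK^2)$ elementary arithmetic gadgets in total, each of size $O(T\log(\mathrm{max})+\log(1/\epsilon))$).

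The main obstacle, in my view, is neither the construction itself nor the per-iteration counting, but the uniform a-priori control of intermediate quantities along the WMMSE trajectory. The approximation network does not, strictly speaking, execute WMMSE, so I must verify that (i) the \emph{true} WMMSE iterates stay inside the regime where the lemmas apply (this is what the set $\mathcal{H}$ and the assumption $\sum_i v(h)^t_i\ge V_{\min}$ buy us), and (ii) the \emph{approximate} iterates remain close enough to the true ones that denominator lower bounds $Y_{\min}-\epsilon_2>0$ required in Lemma~\ref{lemmaed} are preserved throughout all $T$ iterations. This amounts to a simultaneous induction on $t$ where the inductive hypothesis includes both closeness of the approximate iterate to the true one and membership of the approximate iterate in the domain where the gadgets are valid; the chosen precision $n=\Theta(KT\log C'+\log(1/\epsilon))$ must in particular be large enough at every step to keep the approximate denominators away from zero, which is why $V_{\min}$ (and hence $1/V_{\min}$) appears inside the logarithm in the final layer count.
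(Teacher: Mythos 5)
Your overall strategy coincides with the paper's: decompose each WMMSE iteration into multiplication/division gadgets built from Lemmas \ref{lemmama} and \ref{lemmada}, bound the intermediate quantities uniformly over the admissible set, propagate errors through the $T$ iterations via Lemmas \ref{lemmaed} and \ref{lemmaem}, and count bits. (The paper reparametrizes the iteration in terms of $a_k^t = u_k^t w_k^t |h_{kk}|$ and $b_k^t = u_k^t (w_k^t)^2$ so that every denominator is bounded below by $\sigma_k^2$ directly, whereas you keep $(w,u,v)$ and bound $w_k^t$ through the MMSE identity; that is a cosmetic difference.) However, there is a genuine quantitative gap in your error-propagation step. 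You claim that each iteration ``composes $O(K)$'' perturbation lemmas serially, so that $\delta_t \le (C')^{K}\delta_{t-1} + O(2^{-n})$, forcing $n = \Theta(KT\log C' + \log(1/\epsilon))$. This is not how the errors combine: within one iteration the $K$ products $|h_{kj}|^2 (v_j^t)^2$ are evaluated \emph{in parallel} and then \emph{summed}, so their errors add (contributing a multiplicative factor of at most $K$ to the error, not an exponent), and the serial composition depth inside one iteration is $O(1)$ (square, multiply, sum, divide, clip). Consequently the per-iteration amplification factor $G$ is \emph{polynomial} in $K, P_{\max}, H_{\max}, 1/\sigma, 1/H_{\min}, 1/P_{\min}$ --- this is exactly the constant computed in the paper's equation \eqref{erroramplifier} --- so $\log G = O(\log(\max(K,P_{\max},H_{\max},1/\sigma,1/H_{\min},1/P_{\min})))$ and $n = O(T\log G + \log(1/\epsilon))$, which is what the theorem's bounds require.

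Your proposed repair, ``absorbing the $K$ factor into the $O(\cdot)$ that multiplies the neuron count,'' does not work: with $n = \Theta(KT\log C' + \log(1/\epsilon))$ the total depth would be $O(KT^2\log C' + T\log(1/\epsilon))$ and the neuron count $O(K^3 T^2 \log C' + TK^2\log(1/\epsilon))$, both worse than the claimed bounds by a factor of roughly $K/\log K$, since in the theorem $K$ enters the precision only through $\log(\max(K,\dots))$. The fix is to redo the single-iteration perturbation analysis as in the paper: apply Lemma \ref{lemmaem} to each product, sum the $K$ resulting errors, and then apply Lemma \ref{lemmaed} once to the outer division, yielding $\delta_t \le G\,\delta_{t-1}$ with $G$ polynomial in the system parameters, and hence $2^{-n}G^{T}\le\epsilon$ with $n \ge T\log G + \log(1/\epsilon)$. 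On the positive side, your final paragraph correctly identifies a point the paper treats only implicitly, namely that the induction must also certify that the \emph{approximate} iterates keep every denominator above the threshold $Y_{\min}-\epsilon_2>0$ required by Lemma \ref{lemmaed}; making that part of the inductive hypothesis explicit, as you propose, is a genuine improvement in rigor.
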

\begin{corollary}
{The above statement still holds if the WMMSE algorithm has a fixed initialization with $(v^0_k)^2 = P_{\max}$, and the neural network has $h$ as input.}
\end{corollary}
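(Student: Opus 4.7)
The plan is to derive the corollary as a direct specialization of Theorem~\ref{thm:app_wmmse}. The network constructed in Theorem~\ref{thm:app_wmmse} takes $(h,v^0)\in\mathbb{R}^{K^2}\times\mathbb{R}^{K}_{+}$ as input and produces an output whose $i^{\rm th}$ coordinate $\epsilon$-approximates $(v(h)^T_i)^2$ uniformly over the admissible set $\mathcal{H}$ and over initializations satisfying $(v^0_k)^2\leq P_{\max}$ and $\sum_{i=1}^K v^0_i\geq V_{\min}$. The idea is to simply fix the $v^0$-component of the input to the prescribed deterministic value $v^0_k=\sqrt{P_{\max}}$ for all $k$, which turns the $K$ input coordinates corresponding to $v^0$ into constants.

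First I would verify that the fixed initialization $v^0_k=\sqrt{P_{\max}}$ lies in the slice of input space on which the Theorem~\ref{thm:app_wmmse} network enjoys its approximation guarantee: clearly $(v^0_k)^2 = P_{\max}\leq P_{\max}$, and $\sum_{i=1}^K v^0_i = K\sqrt{P_{\max}}$, which is at least $V_{\min}$ provided one chooses $V_{\min}\leq K\sqrt{P_{\max}}$ (this is a free constant in the statement of $\mathcal{H}$ and may be set accordingly). Hence the point $(h,v^0)$ with this particular $v^0$ belongs to the domain already covered by Theorem~\ref{thm:app_wmmse}, for every $h\in\mathcal{H}$.

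Next I would perform the input-absorption step. Let $NET(h,v^0)$ denote the network from Theorem~\ref{thm:app_wmmse}, and let $W^{(1)}$, $b^{(1)}$ be the weights and biases of its first layer, split as $W^{(1)}=[W^{(1)}_h,\,W^{(1)}_{v^0}]$ according to the $(h,v^0)$ partition of the input. Substituting the constant vector $v^0=(\sqrt{P_{\max}},\dots,\sqrt{P_{\max}})^\top$ collapses $W^{(1)}_{v^0}v^0$ into a constant vector, which can be folded into the bias by defining $\widetilde{b}^{(1)}:=b^{(1)}+W^{(1)}_{v^0}v^0$. The resulting network $\widetilde{NET}(h):=NET(h,\sqrt{P_{\max}}\mathbf{1})$ takes only $h$ as input and has exactly the same number of layers, binary units and ReLUs as the original network, since absorbing constants into a bias changes no structural parameter. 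The uniform approximation bound $\max_{h\in\mathcal{H}}\max_i|(v(h)^T_i)^2-\widetilde{NET}(h)_i|\leq\epsilon$ is then inherited verbatim from Theorem~\ref{thm:app_wmmse}, which establishes the corollary.

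There is no real obstacle here; the only point deserving care is the admissibility condition $\sum_{i=1}^K v(h)^t_i\geq V_{\min}$ for all $t$, which is a property of the trajectory of WMMSE rather than of the initialization alone. Since $\mathcal{H}$ is defined in terms of this trajectory, restricting to a fixed initialization does not affect the definition of $\mathcal{H}$ as long as $\mathcal{H}$ is interpreted with that initialization in mind, so no further work is needed.
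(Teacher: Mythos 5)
Your proposal is correct and matches the paper's (implicit) treatment: the corollary is an immediate specialization of Theorem~\ref{thm:app_wmmse} obtained by fixing $v^0_k=\sqrt{P_{\max}}$, checking that this choice satisfies $(v^0_k)^2\leq P_{\max}$ and $\sum_i v^0_i=K\sqrt{P_{\max}}\geq V_{\min}$, and absorbing the now-constant $v^0$ inputs into the first-layer biases without changing the layer or unit counts. The paper offers no separate proof precisely because the appendix construction already binary-expands $v^0$ as a first step, so hard-coding it costs nothing; your only point needing care, the trajectory condition defining $\mathcal{H}$, is handled correctly.
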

\begin{remark}
{The admissible set $\mathcal{H}$ is defined mainly to ensure that the set of channels lie in a compact set. Admittedly, the conditions that the channel realizations are lower bounded by $H_{\min}$, and that they generate WMMSE sequences that satisfy $\sum_{i=1}^K v(h)^t_i \geq V_{\min}$ are somewhat artificial, but they are needed to bound the error propagation. It is worth noting that for problem \eqref{eq:virtual_IA_1}, $v_i=0,\; i=1,\cdots, K$ is an isolated local maximum solution (i.e., the only local maximum solution within a small neighborhood). Therefore it can be proved that as long as the WMMSE algorithm (which belongs to the coordinate descent type method) is not initialized at $v^0_i=0,\; i=1,\cdots, K$, then with probability one it will not converge to it; see recent result in \cite{lee16}. This fact helps  justify the assumption that there exists $V_{\min}>0$ such that $\sum_{i=1}^K v(h)^t_i \geq V_{\min}$ for all $t$.}
\end{remark}

\begin{remark}
	It is worth mentioning that, other operations such as binary search, and finding the solution for a linear systems of equation: $ A x = b$, can also be approximated using similar techniques.  Therefore, many resource allocation algorithms other than WMMSE can be approximated following similar analysis steps as in Theorem \ref{thm:app_wmmse}, as long as they can be expressed as the composition of these `basic' operations such as multiplication, division, binary search, threshloding operations, etc. 
	
\end{remark}

\begin{remark}
The bounds  in Theorem \ref{thm:app_wmmse} provide an intuitive understanding of how the size of the network should be dependent on various system parameters. A key observation is that having a neural network with multiple layers is essential in achieving our rate bounds. Another observation is that the effect of the approximation error on the size of the network is rather minor [the dependency is in the order of $\mathcal{O}(\log(1/\epsilon))$]. However, we do want to point out that the numbers predicted by Theorem \ref{thm:app_wmmse} represent some upper bounds on the size of the network. In practice much smaller networks are often used to achieve the best tradeoff between computational speed and solution accuracy.

\end{remark}

\begin{figure}
	\begin{center}
	 
		{\includegraphics[width=0.6\linewidth]{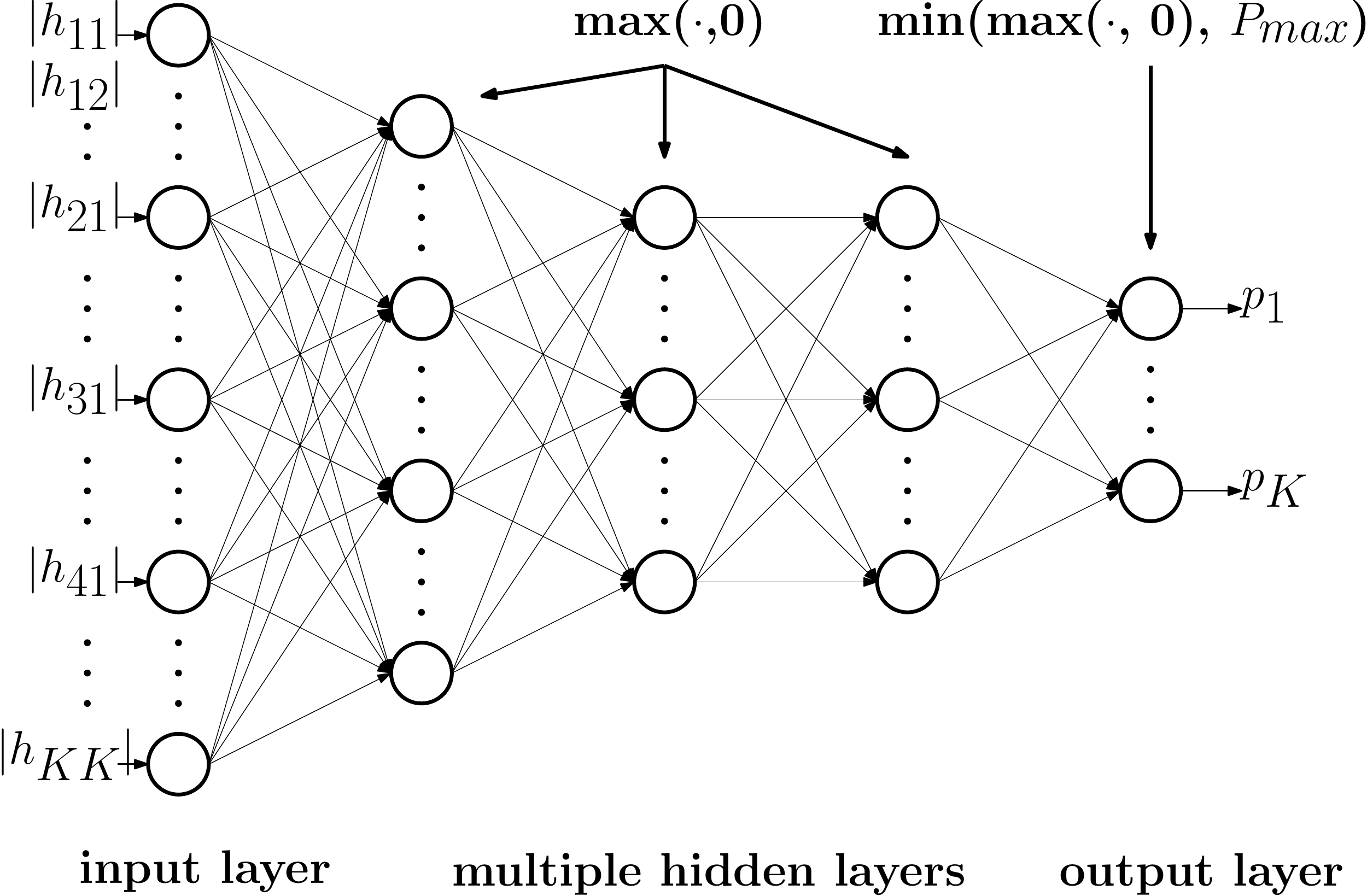}}
	
	\end{center}
 
	\caption{\footnotesize  The DNN structure used in this work. The fully connected neural network with one input layer, multiple hidden layers, and one output layer. The hidden layers use ReLU: $\mbox{max}(\cdot, 0)$ as the activation function, while the output layer use $\mbox{min}(\mbox{max}(\cdot, 0), P_{max})$ to incorporate the power constraint.}  
	\label{fig:structure}
 
\end{figure}

\section{System Setup} \label{ssec:1}
In this section, we describe in detail the DNN architectures as well as how the training and testing stages are performed. 

\noindent{\bf Network Structure.} Our proposed approach uses  a fully connected {neural network} with one input layer, multiple hidden layers, and one output layer as shown in Fig. \ref{fig:structure}. 	The input of the network is the magnitude of the channel coefficients $\{|h_{kj}|\}$, and the output of the network is the power allocation $\{p_k\}$. Further, we use ReLU as the activation function for the hidden layers activation. Additionally, to enforce the power constraint in \eqref{eq:sum_rate_IA} at the output of DNN, we also choose a special  activation function for the output layer, given below
\begin{equation}
y=\min(\max(x,0),P_{\max}).
\end{equation}

\noindent{\bf Data Generation.} The data is generated in the following manner. First, the channel realizations $\{h^{(i)}_{kj}\}$ are generated following certain distributions (to be specified in Section \ref{sec:numerical}), where the superscript $(i)$ is used to denote the index of the training sample. For simplicity we fix $P_{\max}$ and $\sigma_k$ for all $k$.  Then, for each tuple $(P_{\max}, \{\sigma_k\}, \{|h^{(i)}_{kj}|\})$, we generate the corresponding optimized  power vectors $\{p^{(i)}_k\}$ by running  WMMSE, with $v^0_k=\sqrt{P_{\max}}, \; \forall~k$ as initialization, and with $\mbox{obj}_{\rm new}-\mbox{obj}_{\rm old}<10^{-5}$ or $\mbox{number of iterations} > 500$ as termination criteria. The tuple $(\{|h^{(i)}_{kj}|\}, \{p^{(i)}_k\})$ is referred to as  the $i$th training sample. Then we repeat the above process for multiple times to generate the entire training data set, as well as the validation data set. The validation set is used to perform the cross-validation, model selection and early stopping during the training stage. Typically, the size of the validation data set is small compared with that of the training set. We use $\mathcal{T}$ and $\mathcal{V}$ to collect the indices  for the training and validation sets, respectively.

\noindent{\bf Training  Stage.}  We use the entire training data set  $(\{|h^{(i)}_{kj}|\}, \{p^{(i)}_k\})_{i\in\mathcal{T}}$ to optimize the weights of the  neural network. The cost function we use is the {mean squared error} between the label $\{p^{(i)}_k\}$ and the output of the network. The optimization algorithm we use is an efficient implementation of mini-batch {stochastic} gradient descent called the \emph{RMSprop} algorithm,  which divides the gradient by a running average of its recent magnitude \cite{RMSprop}. We choose the decay rate to be 0.9 as suggested in \cite{RMSprop} and select the proper learning rate and batch size by cross-validation.  To further improve the training performance, we initialize the weights using the truncated normal distribution \footnote{We generate a variable from the truncated normal distribution in the following manner: First, generate a variable from standard normal distribution. Then if its absolute value is larger than 2, it is  dropped and re-generated.}. Furthermore, we divide the weights of each neuron by the square root of its number of inputs to normalize the variance of each neuron's output\cite{sqrt_fanin}.

{\noindent{\bf Testing  Stage.} In the testing stage, we first generate the channels  following {\it the same} distribution as the training stage. For each channel realization, we pass it through the trained network and collect the optimized power. Then we compute the resulting sum-rate of the power allocation generated by DNN and compare it with that obtained by the WMMSE. 
We also test the robustness of the learned models by generating channels following distributions that are {\it different} from the training stage, and evaluate the resulting performance.

\section{Numerical Results}\label{sec:numerical}

\subsection{Simulation Setup}
  
The proposed DNN approach is implemented in Python 3.6.0 with TensorFlow 1.0.0 on one computer node with two 8-core Intel Haswell processors, two Nvidia K20 Graphical Processing Units  (GPUs), and 128 GB of memory.  The GPUs are used in the training stage to reduce the training time, but are {\it not} used in the testing stage. To rigorously compare the computational performance of WMMSE with the proposed approach, we have implemented the WMMSE in both Python and C.  The Python implementation is used to compare the computational performance under the same platform while the C implementation is used to achieve the best computational efficiency of WMMSE.

We consider the following two different channel models:

\noindent{\bf Model 1: Gaussian IC.} Each channel coefficient is generated according to a standard normal distribution, i.e., Rayleigh fading distribution with zero mean and unit variance. Rayleigh fading is a reasonable channel model that has been widely used to simulate the performance of various resource allocation algorithms. In our experiment, we consider three different network setups with $K\in \{10, 20, 30\}$.

\noindent{\bf Model 2: IMAC.}
For practical consideration, a multi-cell interfering MAC (IMAC) model is considered with a total of $N$ cells and $K$ users. The cell radius $R$ (i.e., the half distance between centers of adjacent cells) is set to be $100$ meters. In each cell, one BS is placed at the center of the cell and the users are randomly and uniformly distributed in the area from the inner circle of radius $r$ to the cell boundary; see Fig. \ref{fig:IMAC} for an illustration of the network configuration. We assume that the cells all have the same number of users. The channel between each user and each BS is randomly generated according to a Rayleigh fading distribution with zero mean and variance $(200/d)^3L$, where $d$ denotes the distance between the BS and user, and $\mbox{10log10}(L)$ following a normal distribution with zero mean and variance $64$; see \cite{liao13admm} for a similar network setup. In our experiment, we consider five difference network scenarios, with $(N,K)\in\{(3,12), (3,18), (3,24), (7,28), (20,80)\}$.

\begin{figure}[htb]
	\begin{minipage}[b]{.48\linewidth}
		\centering
		\centerline{\includegraphics[width=\linewidth]{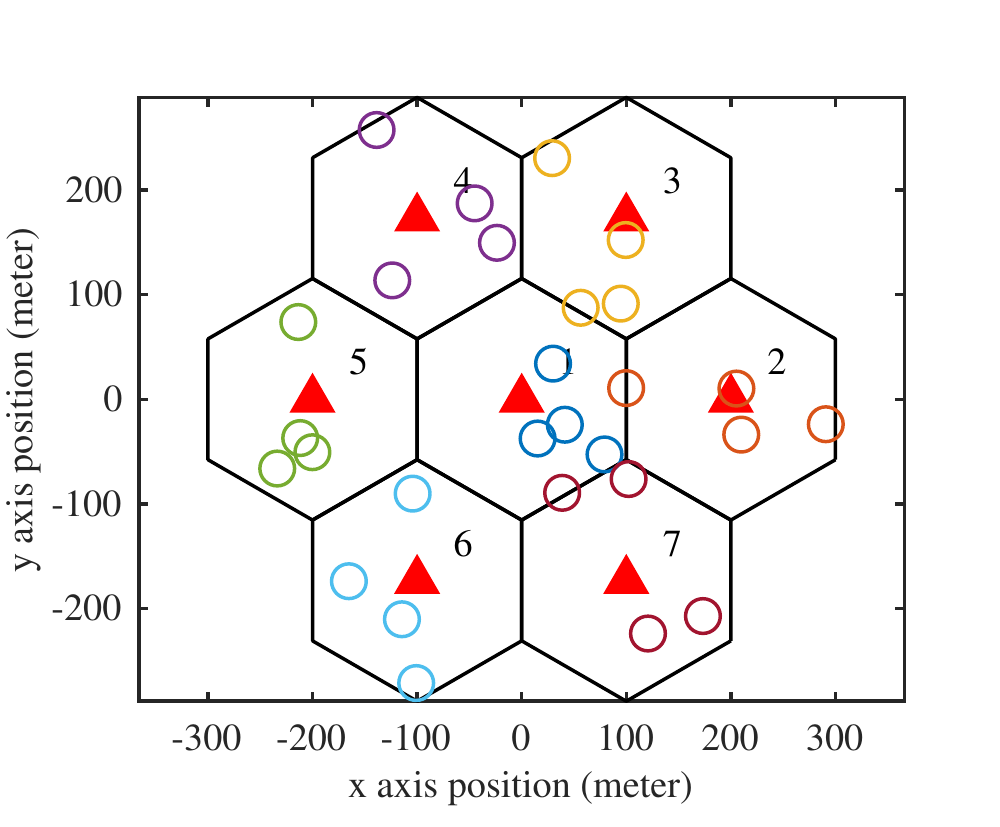}}
		\centerline{ \footnotesize (a) $R=100$m, $r=0$m} \medskip
	\end{minipage}
	\hfill
	\begin{minipage}[b]{0.48\linewidth}
		\centering
		\centerline{\includegraphics[width=\linewidth]{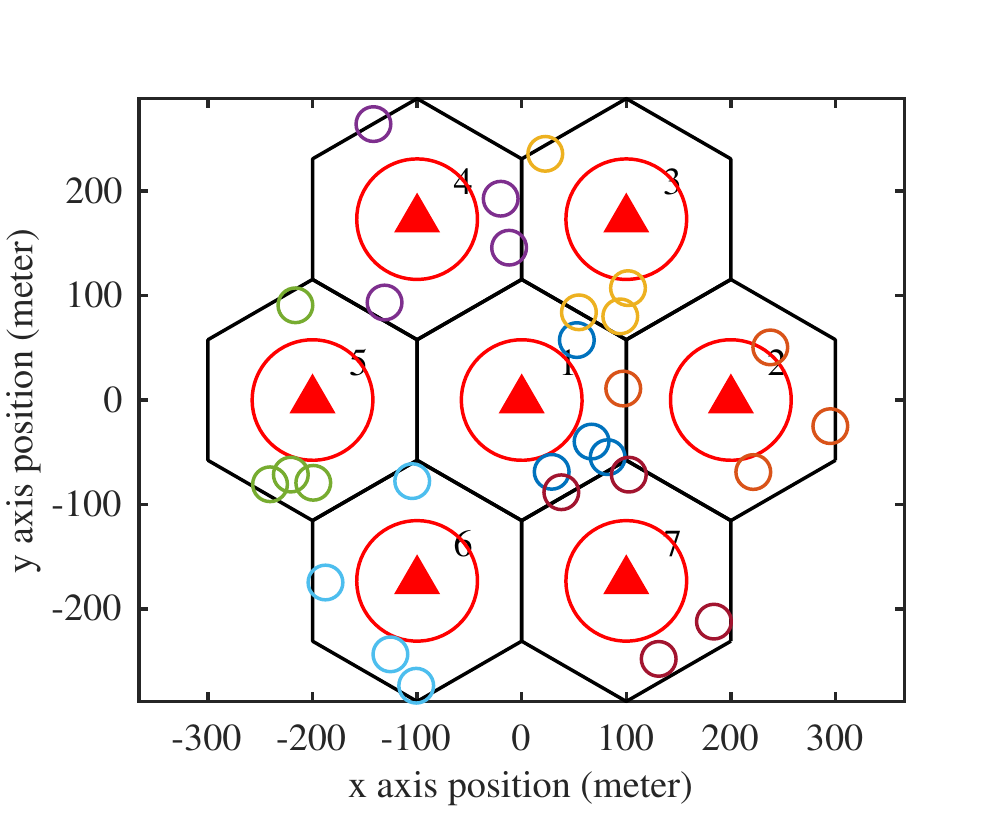}}
		\centerline{\footnotesize (b) $R=100$m, $r=50$m}\medskip
	\end{minipage}
 
	\caption{\footnotesize The network configuration of IMAC with 7 BSs and 28 users. Red triangles represent the BSs, black hexagons represent the boundaries of each cell, and the colored circles represent the user locations; a) shows a case that users located uniformly in the entire cell (with $r=0$); b) shows a case in which users can only locate $r=50m$ away from the cell center.}
	 
	\label{fig:IMAC}
\end{figure}

We mention that for each network scenario (i.e., IC/IMAC with a different number of BSs/users), we randomly generate one million realizations of the channel as the training data and ten thousand realizations of the channel as the validation data and testing data. {We use the distinctive random seeds for training, validation and testing data.}

\subsection{Parameter Selection}\label{sub:model}
 
We choose the following parameters for the neural network. For all our numerical results, we use a network with three hidden layers, one input layer, and one output layer. Each hidden layer contains $200$ neurons. The input to the network  is the set of channel coefficients, therefore the size of it depends on the channel models. More specifically, for  Gaussian IC the input size is $K^2$ while for the IMAC it is  $N\times K$. The output is the set of power allocations, therefore its size is $K$ for both Gaussian IC and IMAC.

To find parameters for the training algorithm, we perform cross-validation for different channel models as follows:

\noindent{\bf Model 1: Gaussian IC.} We study the impact of the batch size and learning rate on the {MSE} evaluated on the validation set,  as well as the total training time. Based on the result shown in Fig. \ref{fig:Cross}, we choose the batch size to be $1000$ and the initial learning rate to be $0.001$. Then we choose to gradually decrease the learning rate when the validation error does not decrease.

\noindent{\bf Model 2: IMAC.}
The parameter selection process is the same as that in the Gaussian IC channel.

\begin{figure}[htb]
	\begin{minipage}[b]{.48\linewidth}
		\centering
		\centerline{\includegraphics[width=\linewidth]{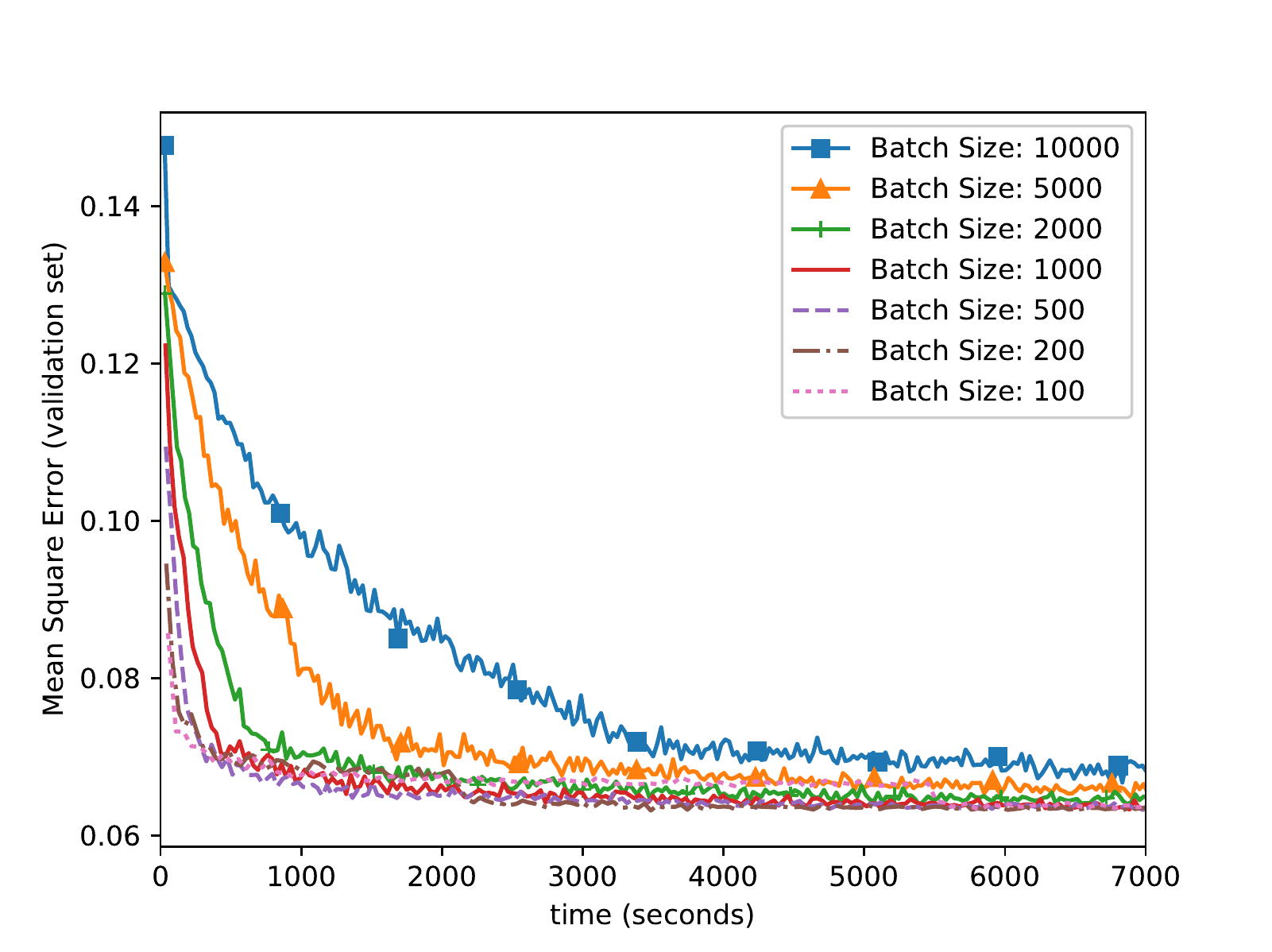}}
		\centerline{ \footnotesize (a) Batch size selection} \medskip
	\end{minipage}
	\hfill
	\begin{minipage}[b]{0.48\linewidth}
		\centering
		\centerline{\includegraphics[width=\linewidth]{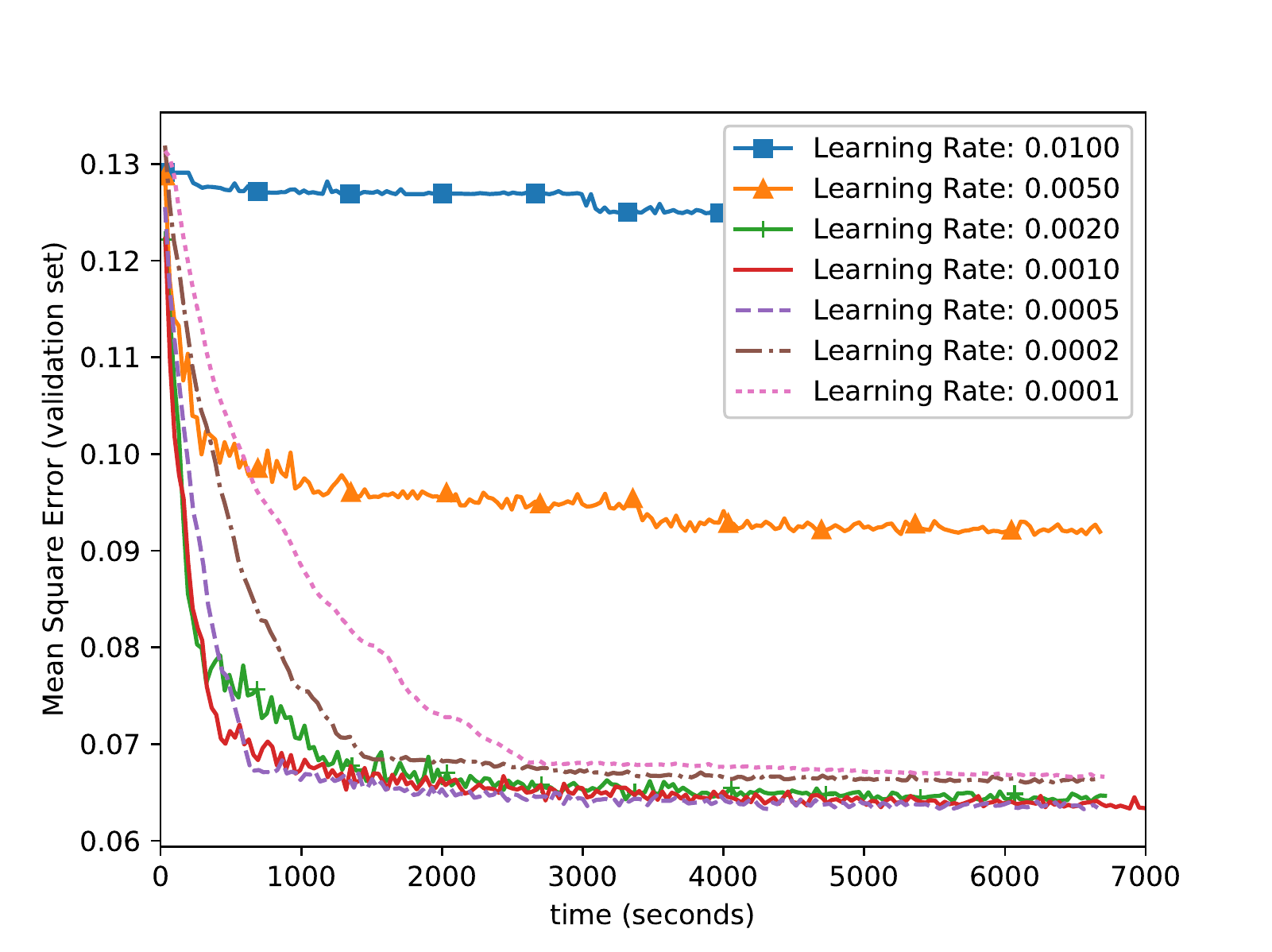}}
		\centerline{\footnotesize (b) Learning rate selection}\medskip
	\end{minipage}
	
	\caption{\footnotesize Parameter Selection for Gaussian IC case, $K=30$, where the MSEs are evaluated on the validation set. 
		Larger batch size leads to slower convergence, while smaller batch size incurs unstable convergence behavior. 
		Larger learning rate leads to a higher validation error, while the lower learning rate leads to slower convergence.
	}
	\label{fig:Cross}
\end{figure}

\subsection{Sum-Rate Performance}
We evaluate the sum-rate  performance of the DNN-based approach in the testing stage compared to the following schemes: 1) the WMMSE; 2) the random power allocation strategy, which simply generates the power allocation as:  $p_k\sim \mbox{Uniform}(0,P_{\max})$, $\forall~k$; 3) the maximum power allocation: $p_k=P_{\max}$, $\forall~k$; The latter two schemes serve as heuristic baselines. 
The simulation results are shown in Fig. \ref{fig:CDF} for both the Gaussian IC and the IMAC. Each curve in the figure represents the result obtained by averaging over 10, 000 randomly generated testing data points. It is observed that the sum-rate performance of DNN is very close to that of the WMMSE, while significantly outperforming the other two baselines. 
It is worth noting that for both Gaussian IC and IMAC, we have observed that the ``optimized" allocation strategies obtained by WMMSE are binary in most cases (i.e., $p_k$ takes the value of either $0$ or $P_{\max}$). Therefore, we also discretize the prediction to binary variables  to increase the accuracy. Specifically, for each output $y$ of our output layer, we {round the prediction up} to one if $y>0.5$, and {round down it} to zero if $y<0.5$.

\begin{figure}[htb]
	    
	\begin{minipage}[b]{.48\linewidth}
		\centering
		\centerline{\includegraphics[width=\linewidth]{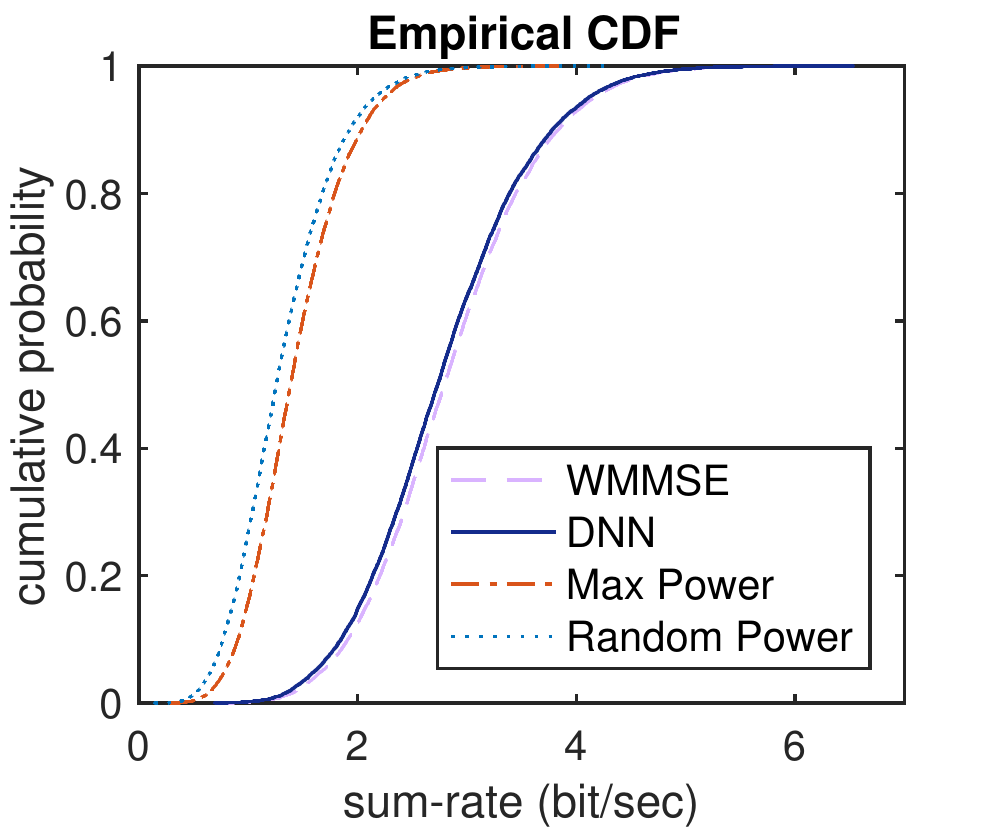}}
 
		\centerline{ \footnotesize (a) Gaussian IC Channel 
		} \medskip

	\end{minipage}
	\hfill
	\begin{minipage}[b]{0.48\linewidth}
		\centering
		\centerline{\includegraphics[width=\linewidth]{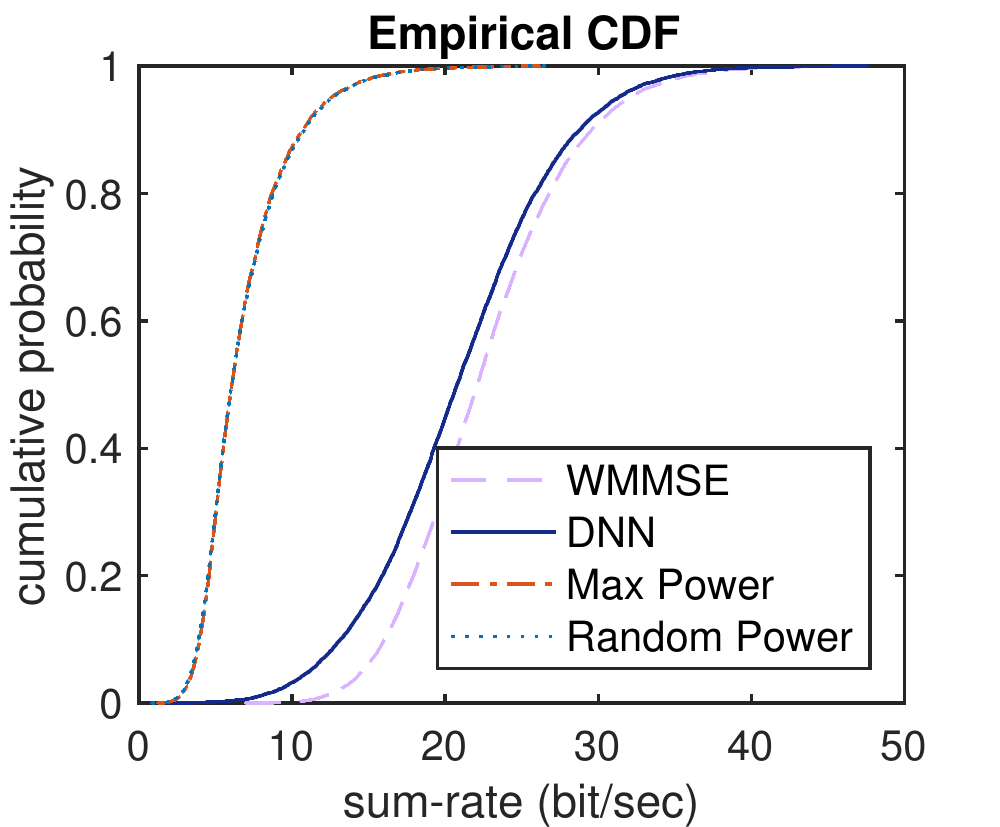}}
 
		\centerline{\footnotesize (b) IMAC Channel }\medskip
	\end{minipage}
	    
	\caption{\footnotesize The cumulative distribution function (CDF) that describes the rates achieved by different algorithms, over $10,000$ randomly generated testing data points, where a) shows the Gaussian IC with $K=10$, b) shows the IMAC with $N=3$ and $K=24$.
		{Compared with WMMSE, on average 98.33\% (resp. 96.82\%) accuracy is achieved for Gaussian IC Channel (resp. IMAC).}   }
	\label{fig:CDF}
	    
\end{figure}

\subsection{Scalability Performance}
 
In this subsection, we demonstrate the scalability of the proposed DNN approach when the size of the wireless network is increased. 
The average achieved sum-rate performance (averaged using $10,000$ test samples) and the percentage of achieved sum-rate of DNN over that of the WMMSE are presented  in TABLE \ref{table:IC} (for the IC) and TABLE \ref{table:IMAC} (for the IMAC).  
It can be seen that our proposed method achieves good scalability for prediction accuracy and computational efficiency. For example, compared with the IMAC WMMSE (which is implemented in C) with $N=20$ and $K=80$, the DNN obtains $90.36\%$ sum-rate while achieving {{\it more than one hundred times} (two orders of magnitude)} speed up. Additionally, in  Fig. \ref{fig:Histogram}, we show the distribution of the sum-rates over the entire test dataset. It is observed that the DNN approach gives a fairly good approximation of the entire rate profile generated by the WMMSE.

\begin{table*}[t]
	 
	\tabcolsep 0pt \caption{ Sum-Rate and Computational Performance for Gaussian IC}

	\begin{center}
	\def\temptablewidth{1\textwidth}
	{\rule{\temptablewidth}{1pt}}
	\begin{tabular*}{\temptablewidth}{@{\extracolsep{\fill}}c|cccc|cccc}
		&\multicolumn{3}{c}{average sum-rate (bit/sec.)}  &&\multicolumn{4}{c}{total CPU time (sec.)}\\
		{\# of users (K)} & DNN	& WMMSE      & DNN/WMMSE &  & DNN  & WMMSE (Python)  &WMMSE (C)       & DNN/WMMSE(C)               \\\hline \hline
		10                 & 2.770   &   2.817        & 98.33\%   &  &  0.047    &  17.98 &   0.355    &  13.24\%    \\
		20                 &3.363      &   3.654      & 92.04\%   &  &   0.093    &59.41 &    1.937     & 4.80\%      \\
		30                 & 3.498     &   4.150        & 84.29\%   &  &  0.149    & 122.91 &   5.103       &2.92\%      \\
		
	\end{tabular*}
	{\rule{\temptablewidth}{1pt}}
\end{center}
	\label{table:IC}
	 
\end{table*}

\begin{table*}[t]
	 
	\tabcolsep 0pt \caption{ Sum-Rate and Computational Performance  for IMAC}
	 
	\begin{center}
		\def\temptablewidth{1\textwidth}
		{\rule{\temptablewidth}{1pt}}
		\begin{tabular*}{\temptablewidth}{@{\extracolsep{\fill}}c|cccc|cccc}
			{\# of base stations }&\multicolumn{3}{c}{average sum-rate (bit/sec.)}&&\multicolumn{4}{c}{total CPU time (sec.)}\\
			{and users  (N,K)}	& DNN & WMMSE       & DNN/WMMSE   && DNN   &WMMSE (MATLAB) &WMMSE(C) & DNN/WMMSE (C)   \\\hline \hline
			$( 3,12)$             &  17.722 &18.028& 98.30\%&&0.021& 22.33&0.27 &7.78\% \\
			$( 3,18)$            &  20.080 &20.606& 97.45\%&&0.022& 42.77& 0.48&4.58\% \\
			$( 3,24)$            &  21.928 &22.648& 96.82\%&&0.025&  67.59& 0.89&2.81\% \\
			$( 7,28)$                 &  33.513 &35.453& 94.53\%&&0.038&  140.44& 2.41&1.58\% \\
			$(20,80)$                 &  79.357 &87.820& 90.36\%&&0.141& 890.19& 23.0&0.61\% \\
			
		\end{tabular*}
		{\rule{\temptablewidth}{1pt}}
	\end{center}
	\label{table:IMAC}
 
\end{table*}

\begin{figure}[!htb]
	\minipage{0.3333\textwidth}
	\includegraphics[width=\linewidth]{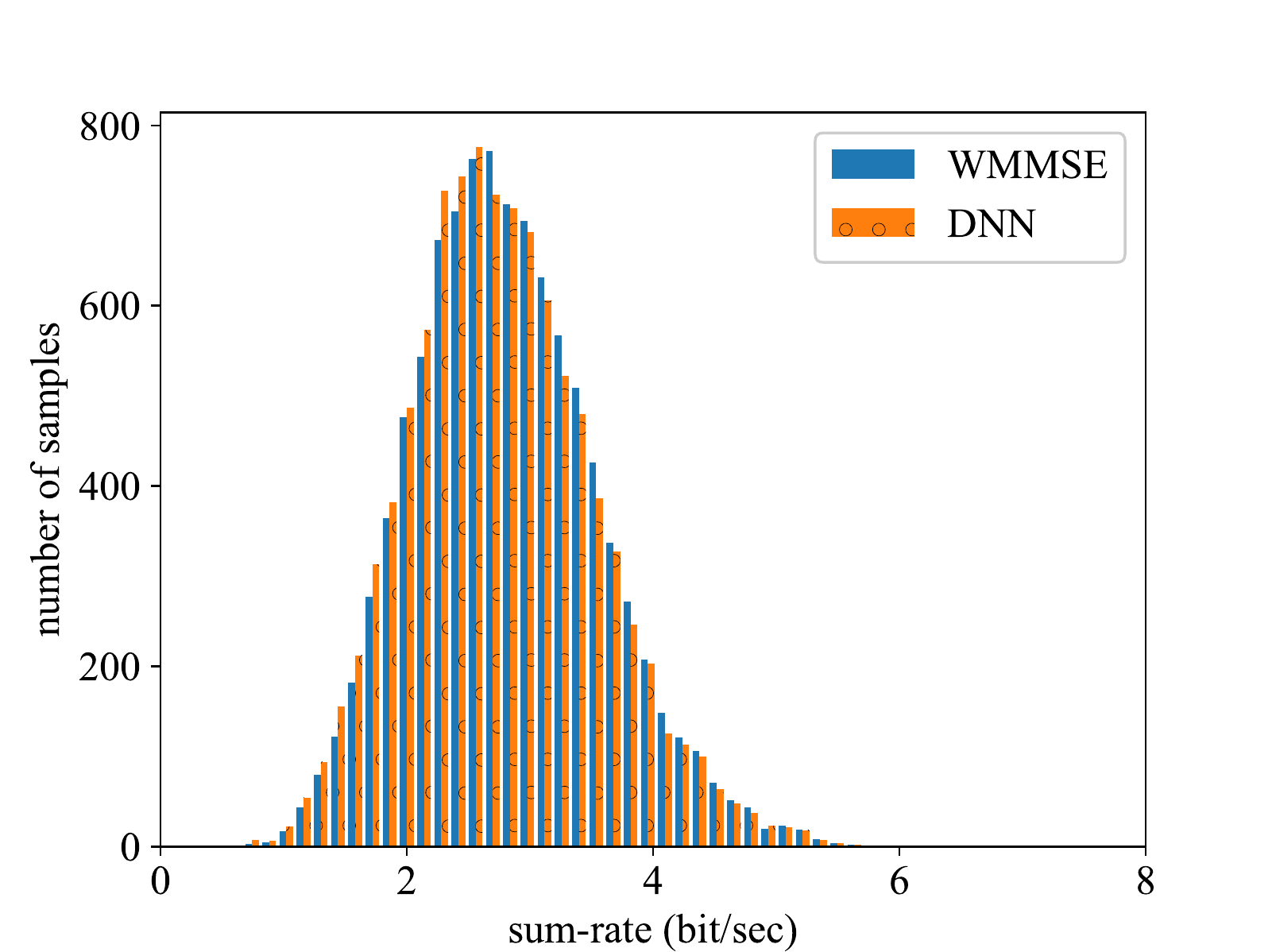}
	\centerline{ \footnotesize (a) K=10}
	\endminipage
	\minipage{0.3333\textwidth}
	\includegraphics[width=\linewidth]{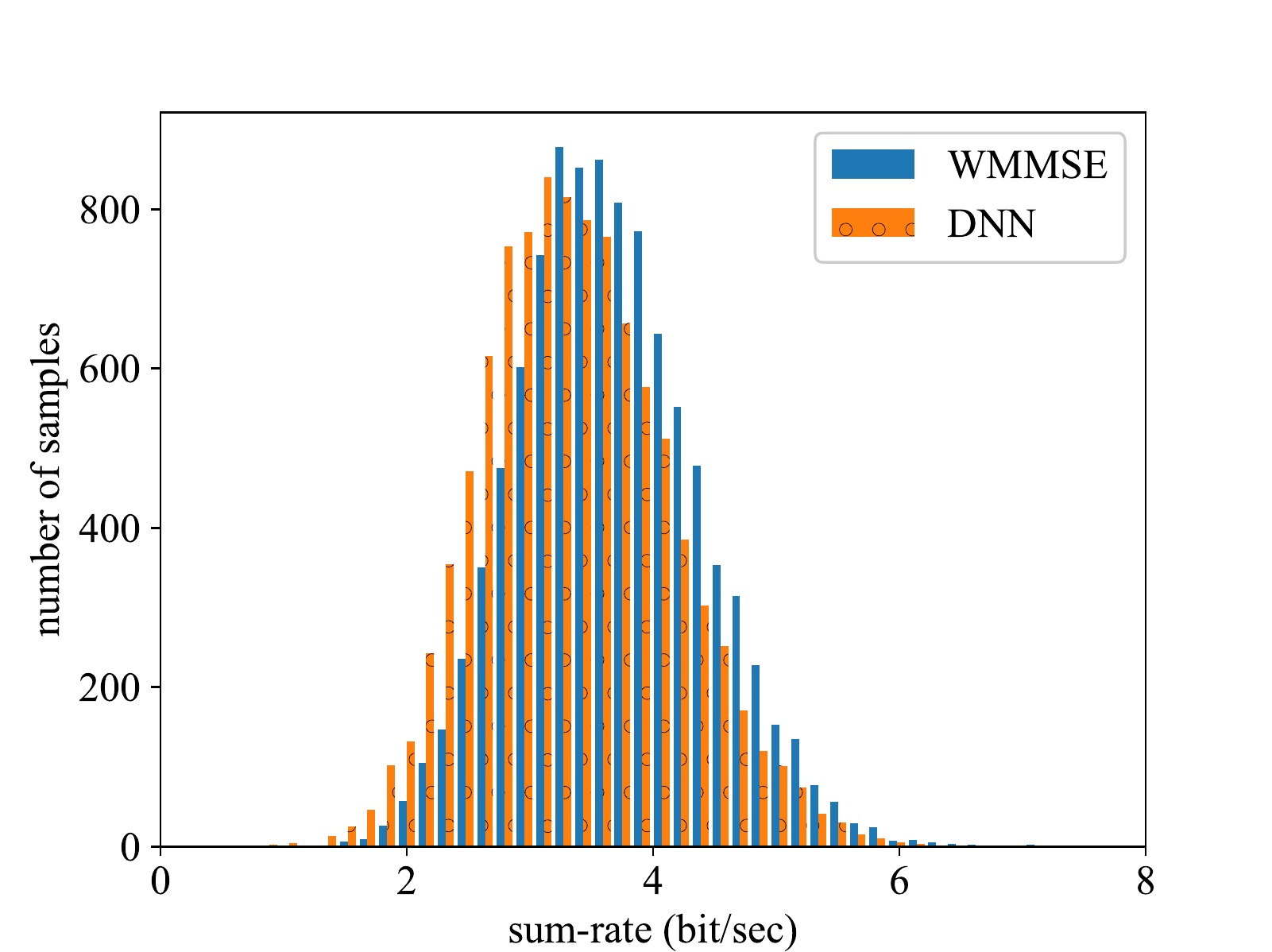}
	\centerline{ \footnotesize (b) K=20}
	\endminipage\hfill
	\minipage{0.3333\textwidth}%
	\includegraphics[width=\linewidth]{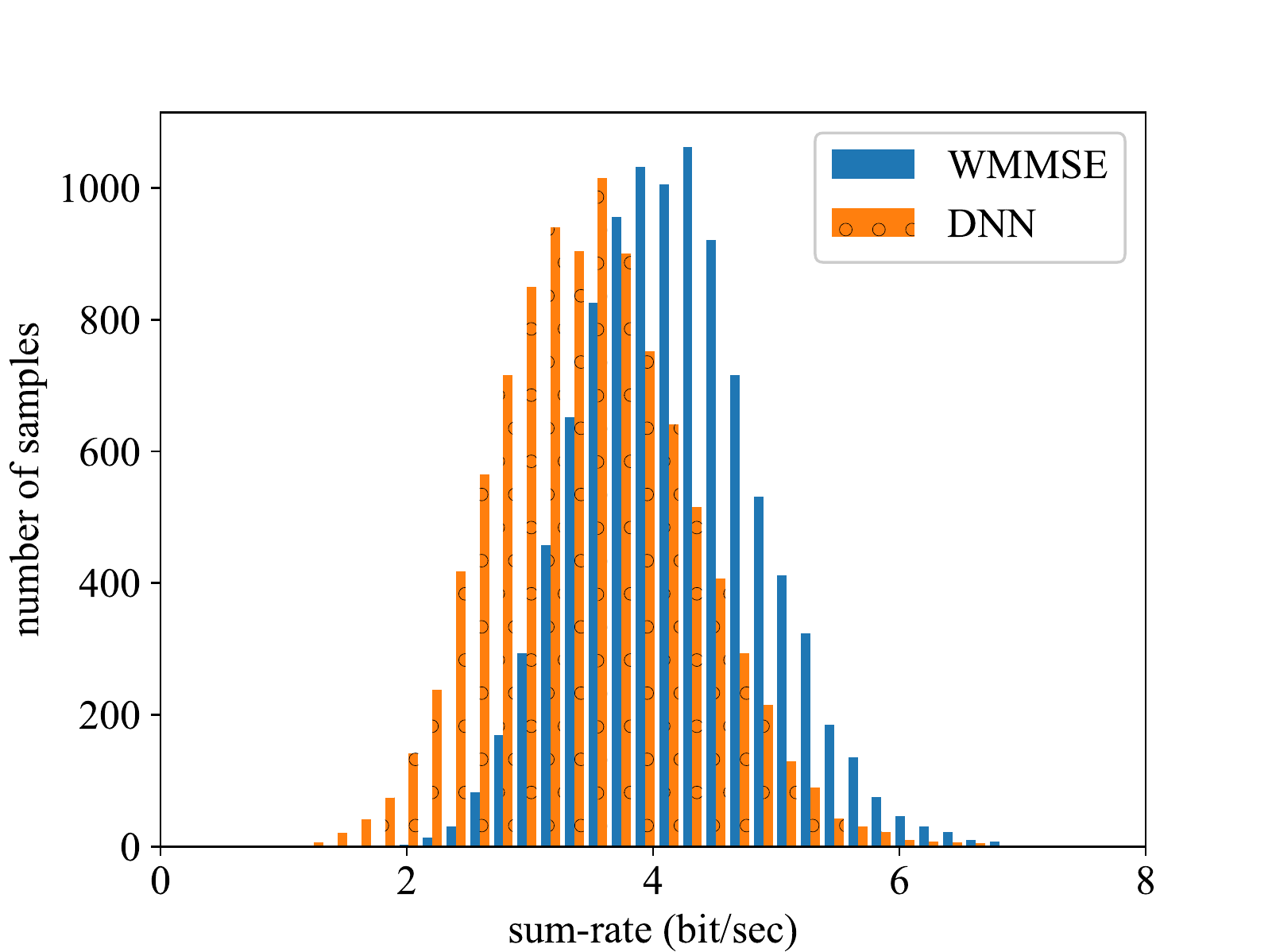}
	\centerline{ \footnotesize (c) K=30}
	\endminipage
	\caption{\footnotesize Distributions of the proposed DNN approach and the WMMSE in various scenarios for Gaussian IC Channel.} 
	 
	\label{fig:Histogram}
\end{figure}

{Moreover, from Table \ref{table:IC} and Table \ref{table:IMAC}, we observe that the DNN performs better in the IMAC setting,  achieving lower computational time and higher relative sum-rate (relative to WMMSE). 
The improved relative computational performance in IMAC is due to the fact that, for the same number of users $K$, the IMAC has $K\times N$ different input channel coefficients while the IC has $K^2$. Therefore when $N\ll K$, the DNN built for IMAC has a smaller size compared with that of the IC. On the contrary, the computation steps for WMMSE are the same for IMAC and IC, where at each iteration steps 7-9 in Fig. \ref{fig:pseudo_code1} are executed for each user $K$. These facts combined together result in better real-time computational efficiency for DNN in the IMAC setting. 
On the other hand, the improved relative sum rate performance in IMAC is mainly due to the fact that the IMAC channels are more `structured' (which are generated according to actual users locations and shadow fading) compared with the IC channels  (which are simply randomly generated from Gaussian distribution). This means that the training and testing samples for the  IMAC  are  closely related to each other,  leading to better generalization capabilities of the DNN model in the testing stage.  }

\subsection{Generalization Performance}

In the previous subsection, we have demonstrated the superiority of our DNN approach, when the distribution of the testing data is the same as that of the training data. However, in practice, there could be significant model mismatch between the training and the testing dataset. For example, the number of users in the system is constantly changing, but it is impossible to train a different neural network for each network configuration. An interesting question is: Whether the model learned by the DNN is able to adapt to such training-testing model mismatch? In this {subsection}, we investigate the {\it generalization} performance of the proposed DNN approach.

\noindent{\bf Model 1: Gaussian IC.}
We study the scenario in which only $K/2$ users are present in the testing, while the DNN is trained with $K$ users (we name this case  ``half user" in TABLE \ref{table:half}). The purpose is to understand the generalization capability of the trained model when input dimension is different. Somewhat surprisingly, we observe that in this case, the DNN still performs well. This result suggests that it is not necessary to train one DNN for each network configuration, which makes the proposed approach very attractive for practical use. Note that in TABLE \ref{table:half} the input size for WMMSE is reduced from $K^2$ to $K^2/4$ while the input size for DNN is still $K^2$ (the unused inputs are replaced by zeros). Therefore compared with the results in TABLE \ref{table:IC}, we can observe that the computational time for WMMSE has been significantly reduced.

\begin{table*}[t]
	\tabcolsep 0pt \caption{ Sum-Rate and Computational Performance for Gaussian IC (Half User Case)}
 
\begin{center}
	\def\temptablewidth{1\textwidth}
	{\rule{\temptablewidth}{1pt}}
	\begin{tabular*}{\temptablewidth}{@{\extracolsep{\fill}}c|cccc|cccccc}
		&\multicolumn{3}{c}{average sum-rate (bit/sec.)} &&\multicolumn{4}{c}{total CPU time (sec.)}\\
		{\# of users (K)} 	&DNN &WMMSE &DNN/WMMSE &&  DNN & WMMSE (Python) & WMMSE (C)& DNN/WMMSE(C)\\ \hline \hline
		10                   &      2.048   &   2.091    &   97.94\%&    &    0.044  & 5.58 & 0.074   &  59.45\%    \\
		20                   &      2.595 &    2.817     &    92.12\%  &  &   0.087 & 18.36&0.346   &  25.14\%     \\
		30                   &      2.901  &       3.311  &    87.62\%   &  &   0.157  &36.98& 0.908  &  17.29\%    
	\end{tabular*}
	{\rule{\temptablewidth}{1pt}}
\end{center}
	\label{table:half}
 
\end{table*}

\noindent{\bf Model 2: IMAC.}
For the IMAC model with fixed number of base stations $N$, there could be  many choices of network parameters, such as $R, r$ and $K$. We study the cases that the DNN is trained with $R=100$ meters and $r=0$ meters,  and apply the resulting model to networks with with $R$ ranging from 100 to 500 meters, and $r$ from 0 to 99 meters (note that $r=99$ meters and $R=100$ meters represents the challenging case where all the users are on the cell boundary). The results are shown in TABLE \ref{table:IMAC_gen1} and TABLE \ref{table:IMAC_gen2}, from which we can conclude that our model generalizes relatively well when the testing network configurations are sufficiently close  to those used in the training. Further, as expected, when the testing models are very different, e.g., $R=500$ meters, the performance degrades. 

\begin{table*}[t]
	 
	\tabcolsep 0pt \caption{ Sum-Rate and Computational Performance for IMAC (With Different Cell Radius $R$)}
 
	\begin{center}
		\def\temptablewidth{1\textwidth}
		{\rule{\temptablewidth}{1pt}}
 
		\begin{tabular*}{\temptablewidth}{@{\extracolsep{\fill}}c|cccc|cccc}
			{\# of base stations }&\multicolumn{4}{c|}{sum-rate performance}&\multicolumn{4}{c}{computational time performance}\\
			{and users  (N,K)}	& R=100m  & R=200m   & R=300m    & R=500m& R=100m  & R=200m   & R=300m    & R=500m \\\hline \hline
			$(3,12)$          & 98.30\% & 99.51\%  &99.62\% & 98.91\%&7.78\%  &19.16\% &19.64\% & 19.77\% \\
			$(3,18)$            &  97.45\% &98.27\% & 97.80\% &95.58\%&4.58\% &13.06\% &  14.67\% & 13.34\% \\
			$(3,24)$          & 96.82\%  & 96.77\%&93.83\%  &87.32\% &2.81\% &10.68\% & 12.44\% & 10.83\% \\
			$(7,28)$               &94.53\%   & 97.19\%& 95.41\% &90.22\%&1.58\% & 5.84\% &   8.68\% &  8.69\%\\
			$(20,80)$               & 90.36\%  &86.27\% & 65.55\% &43.48\%&0.61\% &1.75\% & 3.21\% & 3.31\% \\
		\end{tabular*}
		{\rule{\temptablewidth}{1pt}}
	 
	\end{center}
	\label{table:IMAC_gen1}
\end{table*}

\begin{table*}[t]
	 
	\tabcolsep 0pt \caption{ Sum-Rate and Computational Performance for IMAC (With Different Inner Circle Radius $r$)}
 
	\begin{center}
		\def\temptablewidth{1\textwidth}
		{\rule{\temptablewidth}{1pt}}
		\begin{tabular*}{\temptablewidth}{@{\extracolsep{\fill}}c|cccc|cccc}
			{\# of base stations }&\multicolumn{4}{c|}{sum-rate performance}&\multicolumn{4}{c}{computational time performance}\\
			{and users  (N,K)}	& $r$=0m  & $r$=20m   & $r$=50m    & $r$=99m & $r$=0m  & $r$=20m   & $r$=50m    & $r$=99m \\\hline \hline
			$(3,12)$  & 98.30\% &98.29\%  & 98.38\% & 98.46\%& 7.78\%   & 8.71\%&11.23\% &14.68\%  \\
			$(3,18)$ &  97.45\% &96.99\%  &96.87\%   &96.64\%& 4.58\% &4.70\%  &6.13\% &8.42\%  \\
			$(3,24)$ &  96.82\%& 95.85\%& 95.41\% &94.65\%& 2.81\%  & 3.36\% &4.49\% &6.09\%  \\
			$(7,28)$ & 94.53\%   &93.80\%  & 93.47\%  &92.55\%&1.58\%  & 2.14\% &2.69\% &3.89\%  \\
			$(20,80)$ & 90.36\% & 88.51\%&  86.04\%&82.05\%&0.61\%  &  0.85\%& 1.12\%&1.45\%  \\
		\end{tabular*}
		{\rule{\temptablewidth}{1pt}}
	\end{center}
	\label{table:IMAC_gen2}
 
\end{table*}

\subsection{Real Data Performance}
In this section, we further validate our approach using real data. We consider the interference management problem over the very-high-bit-rate digital subscriber line (VDSL) channels, which can be cast as a power control problem over an interference channel \cite{hong12survey}.
The data is collected by France Telecom R$\&$D; see \cite{karipidis2006crosstalk, karipidis2005experimental} and the references therein. The measured lengths of the VLSI lines were 75 meters, 150 meters, and 300 meters, each containing $28$ loops, and the bandwidth is up to 30 MHz. For each length, all 378 (28 choose 2) crosstalk channels of the far-end crosstalk (FEXT) and near-end crosstalk (NEXT) were measured, for a total of over 3000 crosstalk channels. There is a total of $6955$ channel measurements\cite{karipidis2006crosstalk, karipidis2005experimental}.   

\noindent{\bf Problem setup.} We model the VDSL channel using a $28$-user IC, and we use the magnitude of each channel coefficient to compute the power allocation (by using WMMSE). 
The entire 6955 channel  samples are divided into  5000 samples of {\it validation} set and 1955 samples of {\it testing} set. The  {\it training} data set is {computer-generated following channel statistics learned} from {the} validation set. Specifically, we first calculate the means and variances for both the direct channel and the interfering channels {in} the validation data set, denoted as $(m_d, \sigma^2_d)$, $(m_i, \sigma^2_i)$, respectively. Then we randomly generate 28 direct channel coefficients independently using $\mathcal{N}(m_d, \sigma^2_{d})$, and $28\times 27$ interfering channels from $\mathcal{N}(m_i, \sigma^2_{i})$ to form one training data sample. 
Repeating the above  process, we generate $50,000$ training samples for each measured length.

Using the training, validation and testing data set as described above,  we perform the training and testing following the procedures outlined in {Section \ref{ssec:1}}. {Since the channel coefficients are relatively small compared to our previous random generated channel, we set the environment noise variance to $0.001$ to make sure that WMMSE  gives meaningful output. 

\noindent{\bf Numerical results.} The results are shown in TABLE \ref{table:real}. Despite the fact that the total available data set is quite limited, the proposed DNN approach can still achieve relatively high sum-rate performance on the measured testing data set. This result is encouraging, and it further validates our `learning to optimize' approach. 

\begin{table*}[t]
	\tabcolsep 0pt \caption{Sum-Rate and Computational Performance for Measured VDSL Data}
	\begin{center}
		\def\temptablewidth{1\textwidth}
		{\rule{\temptablewidth}{1pt}}
		\begin{tabular*}{\temptablewidth}{@{\extracolsep{\fill}}c|cccc|cccccc}
			&\multicolumn{3}{c}{average sum-rate (bit/sec.)} &&\multicolumn{3}{c}{total CPU time (sec.)}\\
			{(length, type)} 	&DNN &WMMSE &DNN/WMMSE &&  DNN& WMMSE (C)& DNN/WMMSE(C)\\ \hline \hline
		( 75, FEXT) & 203.61  & 203.69 & 99.96\%&&0.027&0.064&42.18\%\\  
		(150, FEXT) & 135.61  & 136.38 & 99.43\%&&0.026&0.051&50.98\%\\
		(300, FEXT) & 50.795 & 51.010 & 99.58\%&&0.026&0.045&57.78\%\\ \hline
		( 75, NEXT) & 183.91  & 184.19 & 99.85\%&&0.026&0.824&3.16\%\\
		(150, NEXT) & 109.64 & 111.52 & 98.31\%&&0.026&0.364&7.14\%\\
		(300, NEXT) & 41.970 & 44.582  & 94.14\%&&0.026&0.471&5.52\%
		\end{tabular*}
		{\rule{\temptablewidth}{1pt}}
	\end{center}
	\label{table:real}
\end{table*}

\section{Perspectives and Future Work}

In this work, we design a new `learning to optimize' based framework for optimizing wireless resources. Our theoretical results indicate that it is possible to learn a well-defined optimization algorithm very well by using finite-sized deep neural networks. Our empirical results show that,  for the power control problems over either the IC or the IMAC channel, deep neural networks can be trained to well-approximate the behavior of the state-of-the-art algorithm WMMSE. In many aspects, our results are encouraging. The key message: {\it DNNs have great potential as computationally cheap surrogates of expensive optimization algorithms for quasi-optimal and real-time wireless resource allocation}. We expect that they could also be used in many other computationally expensive signal processing tasks with stringent real-time requirements.  However, the current work still represents a preliminary step towards understanding the capability of DNNs (or related learning algorithms) for this type of problems. There are many interesting questions to be addressed in the future, and some of them are listed below:
\begin{enumerate}
	\item How to further reduce the computational complexity of DNNs? 
	\item Is it possible to design specialized neural network architectures (beyond the fully connected architecture used in the present work) to further improve the real-time computation and testing performance?
		\item Is it possible to generalize our proposed approach to more challenging problems such as beamforming for IC/IMAC?
\end{enumerate}

{\footnotesize
\bibliographystyle{IEEEbib}
\bibliography{ref,biblio,icassp_dnn}
}

\begin{appendices}
\section{Proofs}

\begin{proof}[Proof of Lemma \ref{lemmada}]
Let us denote the approximated ${x}/{y}$  as $\widetilde{(x/y)}$, with $(x,y)\in {S}$. The idea is to approximate ${x}/{y}$ by its finite binary expansison $\widetilde{(x/y)} = \sum_{i=-n}^m{2^i z_i}$, where $z_i$ is the $i$-th digit of binary expansion of ${x}/{y}$.  The approximation error will be bounded by

\begin{equation}
 \left|\frac{x}{y} -\widetilde{\frac{x}{y}}\right| \leq \frac{1}{2^n}.
\end{equation}
To proceed, let us define
$$ m := \lceil \log(V_{\max}) \rceil,$$
where `$\log$' is logarithm with base 2. The $i$-th digit of binary expansion of ${x}/{y}$ (the bit corresponding to $2^i$) can be written as

\begin{equation}
z_i = \mbox{Binary}[(z^{(i)}- 2^iy)\geq 0]
\end{equation}
where $z^{(i)}$ is defined below recursively 
{
\begin{align*}
z^{(i)} = z^{(i+1)} - \max(2^{i+1}y + M(z_{i+1}-1),0), 
\end{align*}
with $M \geq 2^m Y_{\max}$ being a large constant and $ z^{(m)} = x$. }
Note that $z^{(i)}$ can be interpreted as ${x}/{y}\ \mbox{mod}\ 2^{i+1}$.
\begin{figure}
	\centerline{\includegraphics[width=9cm]{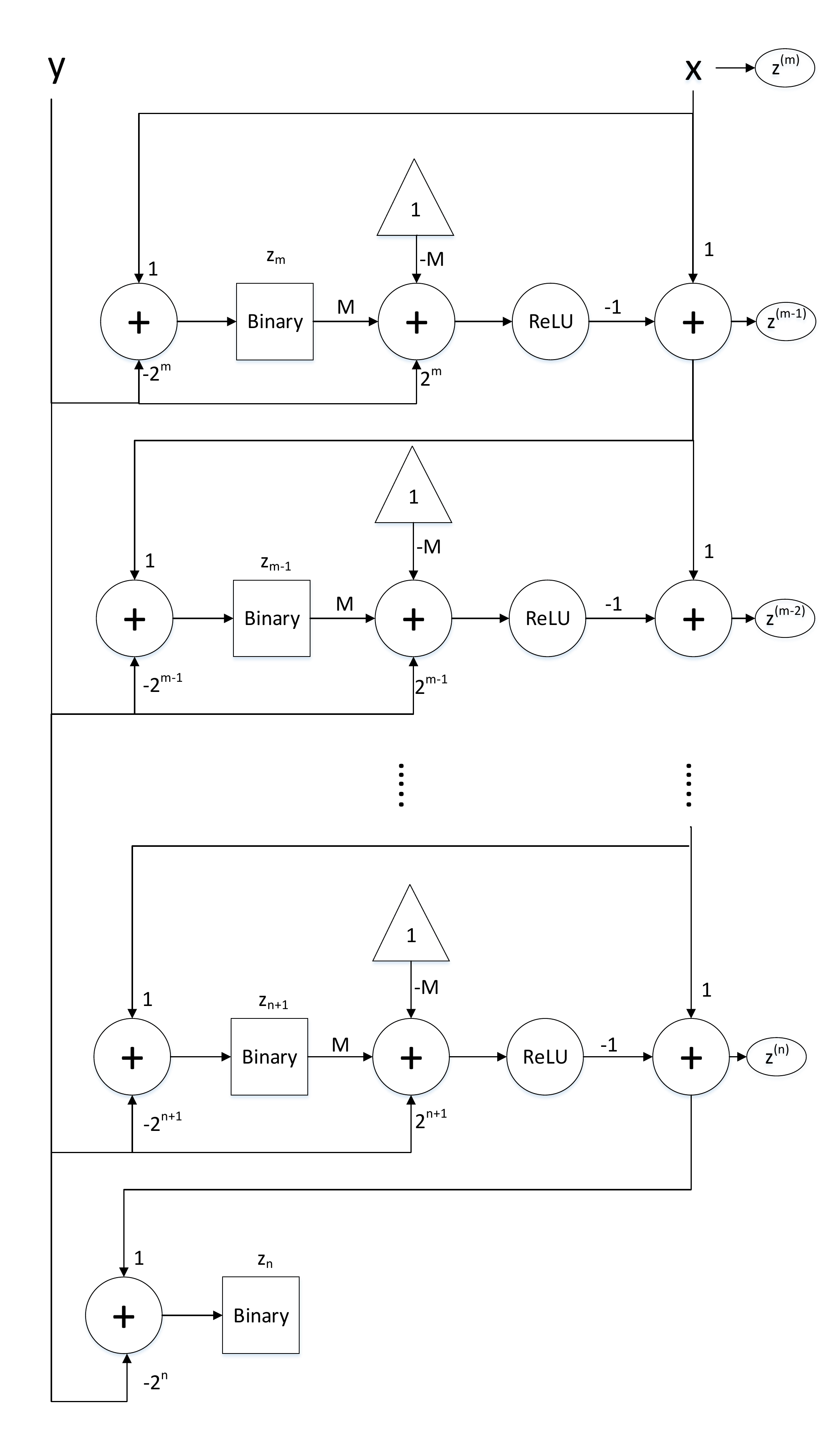}}
	\caption{\footnotesize Illustration of the network architecture for approximating $x/y$.}
	\label{fig:lemmada}
\end{figure}
{From the above construction, it is clear that we need two layers  to represent each {$z_i$}, where one ReLU in the first layer is used to represent $z^{(i)}$, and one Binary unit in second layer is used to represent $z_i$. {An illustration can be found in Fig. \ref{fig:lemmada}.}

Overall, we need $2(m+n+1)$ layers  and $(m+n+1)$ Binary units and $(m+n)$ ReLUs to represent all $z_i$'s. Once all $z_i$'s are constructed, $\widetilde{(x/y)}$ is given by}
\begin{equation*} 
\sum_{i=-n}^m{2^i z_i}. 
\end{equation*}
This completes the proof. 
\end{proof}

\begin{proof}[Proof of Lemma \ref{lemmama}]
The polynomial $xy$ can be approximated by $\tilde{x}y$ where $\tilde{x}$ is the $n$-bit binary expansion of $x$.
\begin{align}\label{multiply}
\begin{split}
\tilde{x}y =& \sum_{i=-n}^m 2^ix_i y 
= \sum_{i=-n}^m\max(2^iy + 2^{m+\lceil\log(Y_{\max})\rceil}(x_i-1),0)
\end{split}
\end{align}
where $x_i$ is $i$-th bit of binary expansion of $x$ and we assume $y \leq Y_{\max}, \forall y$.

Clearly, the approximation error will be at most
\begin{equation}\label{mulerror}
|xy - \tilde{x}y| =(x-\tilde{x})y \leq \frac{Y_{\max}}{2^n}
\end{equation}
since $(x-\tilde{x}) \leq \frac{1}{2^n}$.

The $i$-th digit of binary expansion of $x$ (the bit corresponding to $2^i$) can be done by substituting $y=1$ into $x/y$ in the proof of Lemma \ref{lemmada}, leading to
\begin{equation*}
x_i = \mbox{Binary}[x^{(i)} - 2^i \geq 0]
\end{equation*}
where $x^{(i)}$ have following recurrent definition
\begin{equation*}
x^{(i)} = \max(x^{(i+1)} -2^{i+1} ,0) 
\end{equation*}
and $x^{(m)} = x$, $x^{(i)}$ can be interpreted as $x\ mod\ 2^{i+1}$.

{Obviously, the number of layers and nodes for multiplication is similar to which in the proof of Lemma \ref{lemmada}, the only difference is that one more layer and $(m+n+1)$ ReLUs are added to achieve equation (\ref{multiply}).} 
\end{proof}

\begin{proof}[Proof of Lemma \ref{lemmaed}]
Given the following 
$$0 < Y_{\min} < y - \epsilon_2, 0 \leq  \epsilon_1 \leq x, 0 \leq \epsilon_2 \leq y , \frac{x}{y} \leq Z_{\max},$$ 
we have 
\begin{align*} 
\left|\frac{x}{y} - \frac{x \pm \epsilon_1}{y \pm \epsilon_2}\right|  
= \left|\frac{\pm x\epsilon_2 \pm y \epsilon_1}{y(y \pm \epsilon_2)}\right| \leq  \left(\frac{x}{y} \frac{\epsilon_2}{y-\epsilon_2}+\frac{\epsilon_1}{y-\epsilon_2}\right)  
\leq  \frac{Z_{\max} + 1}{Y_{\min}} \max(\epsilon_1,\epsilon_2).
\end{align*}
This completes the proof. 
\end{proof}

\begin{proof}[Proof of Lemma \ref{lemmaem}]
 Given 	$0 \leq x \leq X_{max},$
 $ 0 \leq y \leq Y_{max},$
 $\max(\epsilon_1,\epsilon_2) \leq \max(X_{\max},Y_{\max}),  \; \epsilon_1,\epsilon_2 \geq 0,$
 we have
\begin{align*} 
|xy - (x \pm \epsilon_1)(y \pm \epsilon_2)|  
\leq  x\epsilon_2 + y\epsilon_1 + \epsilon_1 \epsilon_2 
\leq  3\max(X_{\max},Y_{\max}) \max(\epsilon_1 ,  \epsilon_2)
\end{align*}
This completes the proof.  
\end{proof}

\begin{proof}[Proof of Theorem \ref{thm:app_wmmse}]
The idea is to approximate $v_k^t,a_k^t, b_k^t$ [defined by \eqref{ak} - \eqref{vk}], $k = {1,2,...,K}$ using a neural network with inputs $v_k^{t-1},a_k^{t-1}, b_k^{t-1}, k = {1,2,...,K}$ and $\{|h_{jk}|\}$. By repeating the network $T$ times and concatenating them together, we can approximate $T$ iterations of WMMSE.

Let us define  $\tilde{v}^t_k$ to be the output of the neural network that approximates the $t$th iteration of WMMSE. We will begin upper-bounding the error 
\begin{equation}
\epsilon_T \triangleq \max_{k}{|v_k^T - \widetilde{v}_k^T|}
\end{equation} 
and derive the relationship between upper bound of approximation error and network size.

To proceed, note that the  update rules of WMMSE algorithm can be rewritten in the following form  
\begin{align} 
a_k^t &:= u_k^t w_k^t|h_{kk}| = \frac{|h_{kk}|^2v_k^t}{\sum_{j=1,j \neq k}^K|h_{kj}|^2(v_j^t)^2 + \sigma_k^2}, \label{ak}\\
b_k^t &:= u_k^t (w_k^t)^2 \label{bk}\\
& =  \frac{|h_{kk}|^2(v_k^t)^2}{(\sum_{j=1,j \neq k}^K|h_{kj}|^2(v_j^t)^2 + \sigma_k^2) (\sum_{j=1}^K|h_{kj}|^2(v_j^t)^2 + \sigma_k^2)	} \nonumber\\
& =   \frac{a_k^t v_k^t}{\sum_{j=1}^K|h_{kj}|^2(v_j^t)^2 + \sigma_k^2} ,\nonumber\\
v_k^t &:= \bigg [\frac{\alpha_k a_k^{t-1}}{\sum_{j=1}^K \alpha_j b_j^{t-1}|h_{jk}|^2 } \bigg]_{0}^{\sqrt{P_{\max}}}. \label{vk}
\end{align}
We will approximate each iteration of WMMSE algorithm by a deep neural network and repeat the network $T$ times to approximate $T$ iterations of WMMSE. 
By using inequalities (\ref{diverrorprop}) and (\ref{mulerrorprop}), we can derive the upper bound of approximation error after $T$ iterations.

Firstly, we assume 
\begin{align} \label{epsilon}
\begin{split}
&\alpha_{\min} \leq \alpha_j \leq \alpha_{\max}, \; \forall j,  \\
&H_{\min} \leq |h_{kj}| \leq H_{\max}, \; \forall k,j, \\
&|\tilde{v_k}^{t} - v_k^{t}| \leq  \epsilon,\; {\forall~K,t.} 
\end{split}
\end{align}
Furthermore, we assume for any $h \in \mathcal{H}$ and $t = 0,1,2,...,T$, $\sum_{j=1}^K v_k^t \geq P_{\min}$.  

We note that $a_k^t$ is a rational function of $h_{kj}$ and $v_j^t, j = 1,2,...,K$, and we can approximate any rational function by repeatedly applying approximation to multiplication and division operations. We can construct an approximated version of $a_k^t$  by the following
\begin{equation} \label{a_approx}
\tilde{a}_k^t =   \frac{\widetilde{(|h_{kk}|^2\tilde{v_k}^t)}}{\sum_{j=1,j \neq k}^K\widetilde{|h_{kj}|^2\widetilde{(\tilde{v_j}^t \tilde{v_j}^t)}}+ \sigma_k^2} 
\end{equation} 
where $\widetilde{xy}$ means we will approximate $xy$ using a neural network by leveraging Lemma \ref{lemmama}.  
{From (\ref{epsilon}) and Lemma \ref{lemmama}, the error in numerator of (\ref{a_approx}) is upper-bounded by} 
\begin{align*}
\left|\widetilde{|h_{kk}|^2\tilde{v}_k^t} -  |h_{kk}|^2v_k^t\right| 
= & |h_{kk}|^2\left| \tilde{v}_k^t - \zeta - v_k^t \right| \\
\leq & |h_{kk}|^2\left|{v_k}^t - \epsilon -\frac{1}{2^n} -v_k^t \right| \\
\leq & H_{\max}^2 (\epsilon + \frac{1}{2^n}) \\ 
\leq & 2H_{\max}^2 \max\left(\epsilon,\frac{1}{2^n}\right),
\end{align*}
where the first equality is from $n$-bit binary expansion of $\tilde{v}_k^t$ in $\widetilde{|h_{kk}|^2\tilde{v}_k^t}$ and $\zeta < \frac{1}{2^n}$ is error induced by this binary expansion.

Similarly, for denominator of (\ref{a_approx}), we have  
\begin{align*}
&|\widetilde{(\tilde{v_j}^t \tilde{v_j}^t)} - ({v_j}^t)^2| \\
= & |(\tilde{v_j}^t - \zeta) \tilde{v_j}^t - ({v_j}^t)^2| \\
\leq & \max\bigg( |({v_j}^t -\epsilon - \frac{1}{2^n}) ({v_j}^t - \epsilon) -({v_j}^t)^2 |, \\
&|({v_j}^t +\epsilon) ({v_j}^t + \epsilon) -({v_j}^t)^2 |\bigg) \\
\leq & 3 \max({v_j}^t,\epsilon , 2^{-n}) \max(\epsilon , 2^{-n}) \\
\leq & 3\sqrt{P_{\max}} \max(\epsilon , 2^{-n})
\end{align*}
and 
\begin{align*}
&|\widetilde{|h_{kj}|^2\widetilde{(\tilde{v_j}^t \tilde{v_j}^t)}} -|h_{kj}|^2 ({v_j}^t)^2|\\
= & |h_{kj}|^2|\widetilde{\tilde{v_j}^t \tilde{v_j}^t} - \zeta - ({v_j}^t)^2| \\
\leq & |h_{kj}|^2 (3\sqrt{P_{\max}}\max(\epsilon , 2^{-n}) + 2^{-n}) \\
\leq & 4|h_{kj}|^2 \sqrt{P_{\max}}\max(\epsilon , 2^{-n}) \\
\leq & 4H_{\max}^2 \sqrt{P_{\max}}\max(\epsilon , 2^{-n}).
\end{align*}
Then, by Lemma \ref{lemmaed} and Lemma \ref{lemmada}, the approximation error of (\ref{a_approx}) is upper bounded by 
\begin{align*}
|\tilde{a_k}^t -  a^t_k| \leq& \bigg(4\frac{\sigma_k^2+H_{\max}^2\sqrt{P_{\max}}}{\sigma_k^4}(K-1)H_{\max}^2\sqrt{P_{\max}}+1\bigg) \\
 &\times\max(\epsilon,2^{-n})
\end{align*}
Using the same method, we can approximate $b_k^t$ by
\begin{equation*}
\tilde{b}_k^t = \frac{\widetilde{\tilde{a_k}^t \tilde{v_k}^t}}{\sum_{j=1}^K\widetilde{|h_{kj}|^2(\widetilde{\tilde{v_j}^t\tilde{v_j}^t})} + \sigma_k^2}.
\end{equation*}
And the error is upper bounded by 
\begin{align}
\begin{split}
\left|\tilde{b_k}^t -  b^t_k\right| \leq & \bigg(12\frac{\sigma_k^2+H^2_{\max}\sqrt{P_{\max}}}{\sigma_k^4}\frac{\sigma^4_k+H_{\max}^2P_{\max}}{\sigma_k^8} \\
&(K-1)H^2_{\max}\sqrt{P_{\max}}+1\bigg)\max(\epsilon,2^{-n}).
\end{split}
\end{align}
{The last result is obtained by using Lemma \ref{lemmama} - \ref{lemmaem}.}

Finally, we approximate $v_k^{t+1}$ as 
\begin{equation} \label{v_approx}
\tilde{v}_k^{t+1} =  \left[\frac{\alpha_k a_k^{t}}{\sum_{j=1}^K \alpha_j \widetilde{b_j^{t}|h_{jk}|^2} } \right]_{0}^{\sqrt{P_{\max}}}. 
\end{equation}
By using  Lemma \ref{lemmama} - \ref{lemmaem} again, the approximation error of (\ref{v_approx}) can be upper-bounded by the following 
\begin{align} \label{error_prop}
\begin{split}
|\tilde{v}_k^{t+1} - v_k^{t+1}| 
\leq& \bigg(\frac{1 }{(K\alpha_{\min})^2\sigma_l^2 H_{\min}^8P_{\min}^2}  \bigg(K \alpha_{\min}\sigma_l^2 H_{\min}^4P_{\min} \\
&+ \alpha_k H_{\max}^2\sqrt{P_{\max}} ((K-1) H_{\max}^2P_{\max} + \sigma_k^2) \\
&\times (KH_{\max}^2P_{\max} + \sigma_k^2) \bigg) \\
& \times \sum_{j=1}^K \alpha_j ((K-1) H_{\max}^2P_{\max} + \sigma_k^2)(KH_{\max}^2P_{\max} + \sigma_k^2) \\
& \times H_{\max}^2\bigg(12\frac{\sigma_k^2+H^2_{\max}\sqrt{P_{\max}}}{\sigma_k^4}\frac{\sigma^4_k+H_{\max}^2P_{\max}}{\sigma_k^8}\\
& \times (K-1)H^2_{\max}\sqrt{P_{\max}}+1\bigg) +1 \bigg)\max(\epsilon,2^{-n})\\
:=& G \times  \max(\epsilon,2^{-n})
\end{split}
\end{align}
where the constant $G$ is given below
\begin{align} 
\begin{split} \label{erroramplifier}
G \triangleq &  \frac{1 }{(K\alpha_{\min})^2\sigma_l^2 H_{\min}^8P_{\min}^2}  \bigg(K \alpha_{\min}\sigma_l^2 H_{\min}^4P_{\min} \\
&+ \alpha_k H_{\max}^2\sqrt{P_{\max}} ((K-1) H_{\max}^2P_{\max} + \sigma_k^2) \\
&\times (KH_{\max}^2P_{\max} + \sigma_k^2) \bigg) \\
& \times \sum_{j=1}^K \alpha_j ((K-1) H_{\max}^2P_{\max} + \sigma_k^2)(KH_{\max}^2P_{\max} + \sigma_k^2) \\
& \times H_{\max}^2\bigg(12\frac{\sigma_k^2+H^2_{\max}\sqrt{P_{\max}}}{\sigma_k^4}\frac{\sigma^4_k+H_{\max}^2P_{\max}}{\sigma_k^8}\\
& \times (K-1)H^2_{\max}\sqrt{P_{\max}}+1\bigg) + 1.
\end{split}
\end{align}

{Since we use the big $O$ notation in the theorem, in derivation of $G$, we assumed $H_{\max} \geq 1$ and $P_{\max} \geq 1$ so that the these two number will dominate constant 1 when propagating error, i.e. $H_{\max} \theta + \theta \leq 2H_{\max} \theta$ for any positive number $\theta$. 

Suppose we use $n$-bit binary expansion for all variables and  $\epsilon \geq 2^{-n}$, then \eqref{error_prop} reduces to 
\begin{align*}
&|\tilde{v}_k^{t+1} - v_k^{t+1}| \leq G \times  \epsilon \leq G \times |\tilde{v}_k^{t} - v_k^{t}|
\end{align*}
where the last inequality is due to (\ref{epsilon}). Thus, the approximation error is amplified at most $G$ times through approximation of one iteration.

Since $G \geq 1$, if we first approximate the initialization $v_k^0$ with $n$-bit binary expansion, the approximation error upper bound of latter iterations will be grater than $2^{-n}$, then using the method above to approximate $T$ iterations of WMMSE,} the resulting error will be upper bounded by
\begin{align}\label{finalerror}
2^{-n} {\underbrace{ G \times G ... \times G}_\text{$T$ consecutive multiplications of $G$}}   \leq \epsilon_T
\end{align}
which means 
\begin{equation}\label{n_min}
n \geq T \log G + \log \frac{1}{\epsilon_T}.
\end{equation}

By counting the number of neurons and layers used for approximation, we can derive the trade-off between the approximation error upper bound and the number of units (RelU and binary) and number of layers we use.

To approximate $a_k^t$, we need $\lceil\log(P_{\max})/2\rceil + n + 1$ bits for each $\tilde{v_k^t}$ (since $v_k^t \leq \sqrt{P_{max}}$), $\lceil\log(P_{\max})\rceil +n + 1$ bits for each $ \widetilde{\tilde{v_k}^t\tilde{v_k}^t}$, and $\lceil 2\log(H_{\max})+\log(P_{\max})/2 + 2\log{\frac{1}{\sigma_k}}\rceil+ n + 1$ bits for the division operation. Over all, we need at most
\begin{align}
(3K+1)\lceil\log(P_{\max})/2\rceil + \lceil 2\log(H_{\max})+ 2\log{\frac{1}{\sigma_k}}\rceil \nonumber
+ (2K+1)(n+1) 
\end{align}   
bits for each $a_k^t$.

For approximating $b_k^t$, we can use the binary expansion of $\tilde{v_k^t}$ and $\widetilde{\tilde{v_k}^t\tilde{v_k}^t}$ we obtained for $a_k^t$, thus we just need one more binary expansion process for the division operation. The required number of bits is 
\begin{equation}
\lceil 2\log (H_{\max}) + \log (P_{\max}) + 4\log(\frac{1}{\sigma_k})\rceil + n + 1.
\end{equation}

As for $v_k^{t+1}$, we already have binary expansion of $b_j^t$, then only the division operation needs to be approximated. Note that the projection operator $[ \cdot ]^{\sqrt{P_{\max}}}_{0}$ needed to generate $v_k^{t+1}$ can be implemented by two ReLU units:
\begin{equation}
O = \sqrt{P_{\max}} -  \max(\sqrt{P_{\max}} - \max(I,0), 0)
\end{equation}
where $I$ is the input and $O$ is projected value of $I$.  

The number of bits required for $v_k^{t+1}$ is 
\begin{equation}
\lceil \log(P_{\max})/2 \rceil + n + 1.
\end{equation}
Finally, the number of bits we used is at most
\begin{align}
\begin{split} \label{n_nodes}
TK(6\lceil\log(\frac{1}{\sigma_k})\rceil + 4 \lceil\log(H_{\max})\rceil  
+ (3K+4)(\lceil\log P_{\max}/2\rceil) + (2K+3)(n+1)).
\end{split}
\end{align}
where $n$ satisfies eq. \eqref{n_min}.

The number of layers is $1/K^2$ times of the number of bits due to parallelism, each bit can be implemented by a constant number of binary units and ReLUs, taking $O$ notation of (\ref{n_nodes}), we get Theorem \ref{thm:app_wmmse} .

\end{proof}

\end{appendices}
\end{document}